\documentclass[lettersize,journal]{IEEEtran}
\usepackage{amsmath,amsfonts}
\usepackage{algorithmic}
\usepackage{algorithm}
\usepackage{array}
\usepackage{textcomp}
\usepackage{stfloats}
\usepackage{url}
\usepackage{verbatim}
\usepackage{graphicx}
\usepackage{cite}
\usepackage{amssymb}
\usepackage{amsthm}
\usepackage{mathrsfs}
\usepackage{siunitx}
\usepackage{multirow}
\usepackage{xcolor}
\usepackage{bm}
\usepackage{booktabs}

\usepackage{subcaption}
\newtheorem{lemma}{Lemma}

\newtheoremstyle{ieeeremark}%
  {\topsep}
  {\topsep}
  {\normalfont}
  {0pt}
  {\bfseries}
  {:}
  {0.5em}
  {}

\theoremstyle{ieeeremark}
\newtheorem{remark}{Remark}

\begin{document}

\title{MCRB and MSE Analysis for Parameter Estimation in AFDM-ISAC Systems}

\author{Tianyao Ma, Aimin Tang,~\IEEEmembership{Senior Member,~IEEE,} Yin Xu,~\IEEEmembership{Senior Member,~IEEE,} Qu Luo,~\IEEEmembership{Member,~IEEE,} Dazhi He,~\IEEEmembership{Senior Member,~IEEE,} Wenjun Zhang,~\IEEEmembership{Fellow,~IEEE}
\thanks{Tianyao Ma, Aimin Tang, Yin Xu, Dazhi He, and Wenjun Zhang are from Shanghai Jiao Tong University (e-mail: mty0710, tangaiming, xuyin, hedazhi, zhangwenjun@sjtu.edu.cn). \emph{(Corresponding author: Yin Xu.)}}
\thanks{Qu Luo is from University of Surrey (e-mail: q.u.luo@surrey.ac.uk).}}



\maketitle

\begin{abstract}
Affine frequency division multiplexing (AFDM) is a promising waveform for integrated sensing and communication (ISAC). In AFDM systems, the complex gains, delays, and Doppler shifts are commonly estimated from the AFDM symbols carrying pilots and data simultaneously. In practice, however, the unknown data symbols and data-pilot coupling interference may render the estimator mismatched to the true signal model.
In this paper, we systematically characterize the parameter-estimation performance of AFDM-ISAC systems under practical model misspecification. The main contributions are threefold. First, we extend the Cram\'er-Rao bound (CRB) for a general observation model that treats the data symbols as unknown, which generalizes existing AFDM CRB analyses and serves as the matched benchmark for the subsequent analysis. Second, we identify two practical sources of misspecification, namely a covariance mismatch caused by insufficient pilot-data isolation and a combined covariance-and-mean mismatch caused by sequential single-target estimation, and derive the corresponding misspecified CRB (MCRB). Third, we characterize the pseudotrue parameters under different levels of prior knowledge, analyze the resulting estimation bias, and establish a lower bound (LB) on the mean square error (MSE). Simulation results validate the derived bounds and show that, under model misspecification, the CRB is overly optimistic while the MCRB and LB faithfully characterize the achievable accuracy. The comparison further reveals how these bounds vary with the pilot length and pilot power, providing useful guidance for pilot configuration.
\end{abstract}

\begin{IEEEkeywords}
Affine frequency division multiplexing (AFDM), integrated sensing and communication (ISAC), misspecified Cram\'er-Rao bound (MCRB), parameter estimation, model misspecification.
\end{IEEEkeywords}

\section{Introduction}
\IEEEPARstart{T}{he} proliferation of autonomous vehicles, unmanned aerial vehicles (UAVs), and smart transportation systems has driven a surging demand for integrated sensing and communication (ISAC) in next-generation communication systems \cite{liu2020joint,tang2021self,liu2022fundamental}.
These high-mobility applications require wireless signals to support reliable data transmission while simultaneously extracting environmental information from the same spectrum, hardware, and signal processing resources \cite{liu2022isac6g}. 
Meanwhile, the underlying channels are typically doubly dispersive, exhibiting both time delays due to multipath propagation and Doppler shifts induced by high-speed mobility, which poses fundamental challenges for both communication and sensing.
Legacy waveforms such as orthogonal frequency division multiplexing (OFDM) suffer from severe inter-carrier interference (ICI) under such conditions, degrading both communication reliability and sensing accuracy \cite{bemani2024integrated}. 
To address this issue, emerging waveforms such as orthogonal time frequency space (OTFS) \cite{raviteja2018otfs}, orthogonal chirp division multiplexing (OCDM) \cite{ouyang2016ocdm}, and affine frequency division multiplexing (AFDM) \cite{AFDM} have been investigated. Among them, AFDM is a one-dimensional chirp-based multicarrier waveform operating in the discrete affine Fourier transform (DAFT) domain. Owing to its relatively low implementation complexity and ability to achieve full diversity over doubly dispersive channels, AFDM has recently attracted considerable attention as a promising waveform for high-mobility communications and ISAC \cite{bemani2024integrated,AFDM,afdmScma,afdmStandardization2026,chirpSurvey2026}.


In AFDM-based ISAC systems, sensing and channel estimation share a similar objective, i.e., estimating the complex channel gains, time delays, and Doppler shifts of sensing targets or propagation paths.
Recent AFDM studies have addressed several complementary aspects of this parameter estimation problem. For communication over doubly dispersive channels, pilot-aided schemes have been developed to estimate the sparse channel parameters \cite{AFDM,EP,SP,sctPilot,benzine2026models}. For AFDM-ISAC, sensing algorithms have been studied for delay-Doppler estimation \cite{bemani2024integrated,ni2025twc,luo2025novel}. In addition, AFDM waveform and parameter optimization has been investigated to balance communication and sensing performance \cite{bao2024performance,cao2025agile}. 
These studies highlight the need to quantify the achievable estimation accuracy.
The Cram\'er-Rao bound (CRB) provides a fundamental lower bound (LB) on the mean squared error (MSE) of any unbiased estimator \cite{slepian1954estimation,bangs1971array}.
For OFDM-ISAC systems, CRB-based analyses have been extensively developed to characterize sensing-performance limits and guide system design \cite{liu2022fundamental,wang2026device,mura2025optimized,liu2025carrier,gupta2025data}.
However, due to the differences in the waveforms and pilot structures, the CRB results for OFDM cannot be directly applied to AFDM systems.

Driven by this need, several recent works have investigated the CRB for AFDM. In \cite{zhang2026CRBpilot}, the CRB for joint estimation of channel gain, delay, and Doppler was derived under a single-target monostatic sensing model and used to guide pilot design. The CRB was also employed in \cite{bao2024performance} to evaluate the communication-sensing performance tradeoff achieved by adjusting the AFDM parameters and was further adopted as the objective of a waveform optimization problem in \cite{cao2025agile}, where the delay and Doppler CRBs are minimized over the DAFT chirp parameters. Unlike the above works \cite{bao2024performance,cao2025agile,zhang2026CRBpilot}, which consider a single target, the authors in \cite{wang2025afdmLEO} considered multiple dominant paths in LEO inter-satellite links and incorporated the trace of the CRB matrix as a regularization term in the design of minimum mean square error (MMSE) pre-equalizers.
However, all the aforementioned studies assume that the entire transmitted signal is perfectly known at the receiver, with randomness arising only from additive noise. This assumption excludes the case where data symbols coexist with pilots but remain unknown to the sensing or communication receiver, thereby introducing additional randomness that changes the covariance structure of the received observation.

More importantly, these CRB analyses also presume that the estimator is perfectly matched to the true signal model. In practice, however, AFDM-based parameter estimation is often subject to model mismatch. Existing schemes commonly use guard symbols to separate the pilot observation from surrounding data symbols in the DAFT domain \cite{AFDM,EP,benzine2026models}. 
When the guards are insufficient, the data-induced component remains in the observation covariance, whereas conventional matched-filtering estimators \cite{bemani2024integrated,li2026matched,sctPilot} implicitly assume a scaled-identity covariance and thus disregard this component, leading to a covariance mismatch. 
In sequential single-target estimation with an unknown number of targets \cite{AFDM,li2026matched,sctPilot}, the targets that have not been estimated are absent from the assumed signal mean, introducing an additional mean mismatch. In both cases, the assumed probability density function (PDF) no longer matches the true distribution of the received signal, which is referred to as model misspecification.

To address the model misspecification, a generalized analytical framework has been established \cite{vuong1986cramer, richmond2015parameter, MCRB}.
Within this framework, the CRB is no longer a valid LB on performance for two reasons. 
First, since no parameter value renders the assumed PDF equal to the true one, no true parameter exists within the assumed model in general. The estimator instead converges to the so-called pseudotrue parameter, which minimizes the Kullback-Leibler divergence (KLD) between the assumed and true PDFs.
Second, the gap between this pseudotrue parameter and the true one introduces an estimation bias. As a result, the achievable performance is more accurately characterized by the misspecified CRB (MCRB) \cite{vuong1986cramer,richmond2015parameter}. The MCRB lower-bounds the error covariance of any misspecified-unbiased estimator relative to the pseudotrue parameter, while the overall MSE referred to the true parameter is lower-bounded by the sum of the MCRB and the squared bias \cite{MCRB}.
These bounds reduce to the classical CRB when the model is correctly specified. 
The MCRB framework has been applied to a variety of parameter-estimation problems with different sources of model misspecification. In OFDM-based ranging and localization, the ranging bias and accuracy degradation caused by unmodeled multipath were quantified in \cite{lucas2026ofdm}, while the impact of hardware impairments on localization accuracy was characterized in \cite{chen2024ofdmHWI}. For near-field reconfigurable intelligent surface (RIS)-aided localization, the performance loss incurred when adopting a simplified far-field model was analyzed in \cite{sun2026nearfield}. Beyond localization, the MCRB under ignored multipath components was derived for direction-of-arrival estimation in automotive multiple-input multiple-output (MIMO) radar in \cite{moshe2023mcrb}, while channel-order misspecification was analyzed for blind channel estimation in \cite{thanh2021blind}.
Nevertheless, an MCRB analysis tailored to AFDM-ISAC systems remains absent.

Motivated by this gap, in this paper we systematically investigate the parameter-estimation performance of AFDM-ISAC systems under both matched and misspecified conditions. We derive the relevant performance bounds, characterize the pseudotrue parameters, and validate and compare these bounds through simulations. The main contributions of this work are summarized as follows:
\begin{itemize}
\item We extend the CRB for a general AFDM-ISAC observation model in which the data symbols transmitted together with the pilots are treated as unknown. This generalizes existing AFDM CRB analyses, which regard the transmitted data as deterministic and known at the receiver, and serves as the matched benchmark for the subsequent mismatched analysis.

\item We identify and analyze two practical sources of model misspecification in AFDM parameter estimation, namely a covariance mismatch caused by insufficient pilot-data isolation, and a combined covariance-and-mean mismatch caused by sequential single-target estimation. For each case, we clarify the mechanism, formulate the observation model and its PDF, and derive the corresponding MCRB.

\item We characterize the pseudotrue parameters and investigate the MSE under model misspecification. In particular, for the combined covariance-and-mean mismatch, we derive the pseudotrue parameters under different levels of prior knowledge of the unknown parameters. The resulting bias, together with the MCRB, yields an LB on the MSE with respect to the true parameters, thereby characterizing the attainable error floor.

\item We validate the derived CRB by comparing it with the MSE of matched estimators under both unknown- and known-data conditions. We further validate the derived MCRB and LB on the MSE of the misspecified estimator, and compare the CRB, MCRB, and LB under different pilot configurations. The results reveal the impact of pilot length and pilot power, providing guidelines for pilot configuration under practical mismatched estimation.
\end{itemize}

The remainder of this paper is organized as follows. The system model is presented in Section \ref{sec:system}. The CRB and MCRB are derived in Section \ref{sec:CRBandMCRB}. The pseudotrue parameters, bias, and MSE are analyzed in Section \ref{sec:MSE}, followed by the simulation results presented in Section \ref{sec:simu}. Finally, this paper is concluded in Section \ref{sec:conclusion}.

\emph{Notation:} Italic, bold lowercase, and bold uppercase letters denote scalars, vectors, and matrices, respectively. $\mathbb{R}$ denotes the set of real numbers. $\lfloor\cdot\rfloor$, $\Re\{\cdot\}$, and $\Im\{\cdot\}$ denote the floor, real-part, and imaginary-part operations. $||\cdot||_0$ and $||\cdot||$ denote the $\ell_0$- and $\ell_2$-norms, respectively. $(\cdot)^\text{T}$, $(\cdot)^\text{H}$, and $(\cdot)^{-1}$ denote transpose, Hermitian transpose, and matrix inversion, respectively. $[\mathbf{A}]_{m,n}$ and $[\mathbf{A}]_{a:b,c:d}$ denote the $(m,n)$-th entry and the sub-block formed by rows $a$ through $b$ and columns $c$ through $d$ of $\mathbf{A}$, respectively. $\text{diag}(a_n,n=0,1,...,N-1)$ denotes a diagonal matrix with diagonal entries $a_0,\ldots,a_{N-1}$, and $\mathrm{tr}(\cdot)$ denotes the trace. 
$\mathbf{I}_N$ and $\mathbf{0}$ denote the $N\times N$ identity matrix and an all-zero vector or matrix, respectively. $\mathcal{CN}(\bm{\mu},\mathbf{C})$ denotes the complex Gaussian distribution with mean $\bm{\mu}$ and covariance $\mathbf{C}$. $\mathbb{E}_p[\cdot]$ and $\mathrm{Cov}[\cdot]$ denote expectation under PDF $p$ and covariance, respectively.

\section{System Model} \label{sec:system}

\subsection{AFDM Modulation and Demodulation} \label{sec:AFDM_mod}
In AFDM systems, the DAFT-domain input signal $\mathbf{x} \in \mathbb{A}^N$, where $\mathbb{A}$ denotes the modulation alphabet, is mapped onto $N$ orthogonal chirp subcarriers. Specifically, by applying the inverse DAFT (IDAFT), the time-domain transmitted signal is given by
\begin{align}
    \mathbf{s} = \mathbf{A}^\text{H}\mathbf{x} = \mathbf{\Lambda}_{c_1}^\text{H} \mathbf{F}^\text{H} \mathbf{\Lambda}_{c_2}^\text{H} \mathbf{x},
    \label{AFDM_mod}
\end{align}
where $\mathbf{A} \triangleq \mathbf{\Lambda}_{c_2} \mathbf{F} \mathbf{\Lambda}_{c_1}$ denotes the DAFT matrix, $\mathbf{\Lambda}_{c} \triangleq \text{diag}(e^{-j2\pi cn^2}, n=0,1,...,N-1)$ with the AFDM modulation parameters $c \in \{c_1,c_2\}$ \cite{AFDM}. $\mathbf{F}$ denotes the discrete Fourier transform (DFT) matrix with its $(m,n)$-th entry given by $[\mathbf{F}]_{m,n} = \frac{1}{\sqrt{N}}e^{-j2\pi mn/N}$, $m,n=0,1,\ldots,N-1$.

The time-domain signal is transmitted after a chirp-periodic prefix (CPP)~\cite{AFDM} is appended, and is then reflected by the targets. At the receiver, after the CPP is removed, demodulation is performed as
\begin{align}
    \mathbf{y} = \mathbf{A}\mathbf{r} = \mathbf{\Lambda}_{c_2} \mathbf{F} \mathbf{\Lambda}_{c_1} \mathbf{r},
    \label{AFDM_demod}
\end{align}
where $\mathbf{y}$ and $\mathbf{r}$ are the received signals in the DAFT and discrete time domain, respectively.

\subsection{AFDM Input-Output Relationship} \label{sec:input-output}
Denote $P$ by the number of  sensing targets, and let $h_i$, $k_i$, and $l_i$ be the complex gain, normalized Doppler shift, and normalized integer delay associated with the $i$-th target, respectively, the received discrete-time-domain signal can be expressed as
\begin{align}
    \mathbf{r} = \mathbf{H}_\text{t}\mathbf{s} + \mathbf{w},
    \label{rs_TD}
\end{align}
where $\mathbf{H}_\text{t} \triangleq \sum_{i=1}^P h_i \mathbf{\Gamma}_{\text{CPP}_i} \mathbf{\Delta}_{k_i} \mathbf{\Pi}^{l_i}$ \cite{AFDM}. 
$\mathbf{w}\sim \mathcal{CN}(0,\sigma_\text{n}^2\mathbf{I}_N)$ is the complex Gaussian noise vector with the power of $\sigma_\text{n}^2$. 
$\mathbf{\Gamma}_{\text{CPP}_i} = \mathbf{I}_N$, i.e., an $N\times N$ identity matrix, when $2Nc_1$ is integer and $N$ is even. $\mathbf{\Delta}_{k_i} \triangleq \text{diag}(e^{-j2\pi{k_i}n/N}, n=0,1,...,N-1)$, and $\mathbf{\Pi}$ is the forward cyclic-shift matrix, as defined in \cite[Eq. (25)]{AFDM}.

The complex gains, normalized Doppler shifts, and normalized delays of the targets constitute the sensing parameters of interest in this work. 
For the $i$-th target, $k_i \in [-k_\text{max}, k_\text{max}]$ is normalized with the subcarrier spacing as $k_i = \nu_i/\Delta f$, where $k_\text{max} \triangleq \nu_\text{max}/\Delta f$ is the maximum normalized Doppler shift, $\nu_\text{max}$ is the maximum Doppler shift in \SI{}{Hz}, $\Delta f$ is the subcarrier spacing, and $\nu_i=2v_if_c/c$ denotes the Doppler shift in \SI{}{Hz} induced by the target's radial velocity $v_i$, with $f_c$ and $c$ denoting the carrier frequency and the speed of light, respectively.
Let $\tau_\text{max}$ denote the maximum delay in seconds and $l_\text{max}\triangleq \tau_\text{max}/T_\text{s}$ the maximum normalized delay. Similarly, $l_i$ is normalized as $l_i = \tau_i/T_\text{s}$ and bounded by $[0, l_\text{max}]$, where $\tau_i = 2R_i/c$ is the round-trip propagation delay in seconds corresponding to the target's range $R_i$ and $T_\text{s}$ represents the sampling time.
The complex gain $h_i$ of the $i$-th target subsumes its radar cross-section (RCS), round-trip path loss, and phase response. 

Combining (\ref{AFDM_mod}), (\ref{AFDM_demod}) and (\ref{rs_TD}), and denoting the effective channel matrix as $\mathbf{H}_\text{eff}$, the input-output relation of AFDM in the DAFT domain is given by
\begin{align}
   \mathbf{y} = \mathbf{H}_\text{eff}\mathbf{x} + \mathbf{\Tilde{w}},
   \label{yx_DAFT}
\end{align}
where $\mathbf{H}_\text{eff} \triangleq \sum_{i=1}^P h_i \mathbf{A} \mathbf{\Gamma}_{\text{CPP}_i} \mathbf{\Delta}_{k_i} \mathbf{\Pi}^{l_i} \mathbf{A}^\text{H}$ \cite{bemani2024integrated} and 
$\mathbf{\Tilde{w}}\sim \mathcal{CN}(0,\sigma_\text{n}^2\mathbf{I}_N)$ is the complex Gaussian noise vector in the DAFT domain. 

The chirp rate $c_1$ should be tuned to match the delay and Doppler characteristics of the channel. In general, to attain a large diversity gain and following existing works \cite{AFDM}, we choose $c_1 = \frac{2(k_\text{max}+k_\nu)+1}{2N}$, where $k_\nu$ accounts for the fractional-Doppler case and is set to zero for the integer-Doppler case. The choice of $c_2$ is relatively loose, as it can be any irrational number \cite{AFDM}.
With $c_1$ and $N$ properly chosen, $\mathbf{\Gamma}_{\text{CPP}_i}$ reduces to the identity matrix \cite{AFDM} and is omitted hereafter for brevity.

\subsection{Parameterized Observation Signal}
\begin{figure}[tb]
    \centering
    \includegraphics[width=0.9\columnwidth]{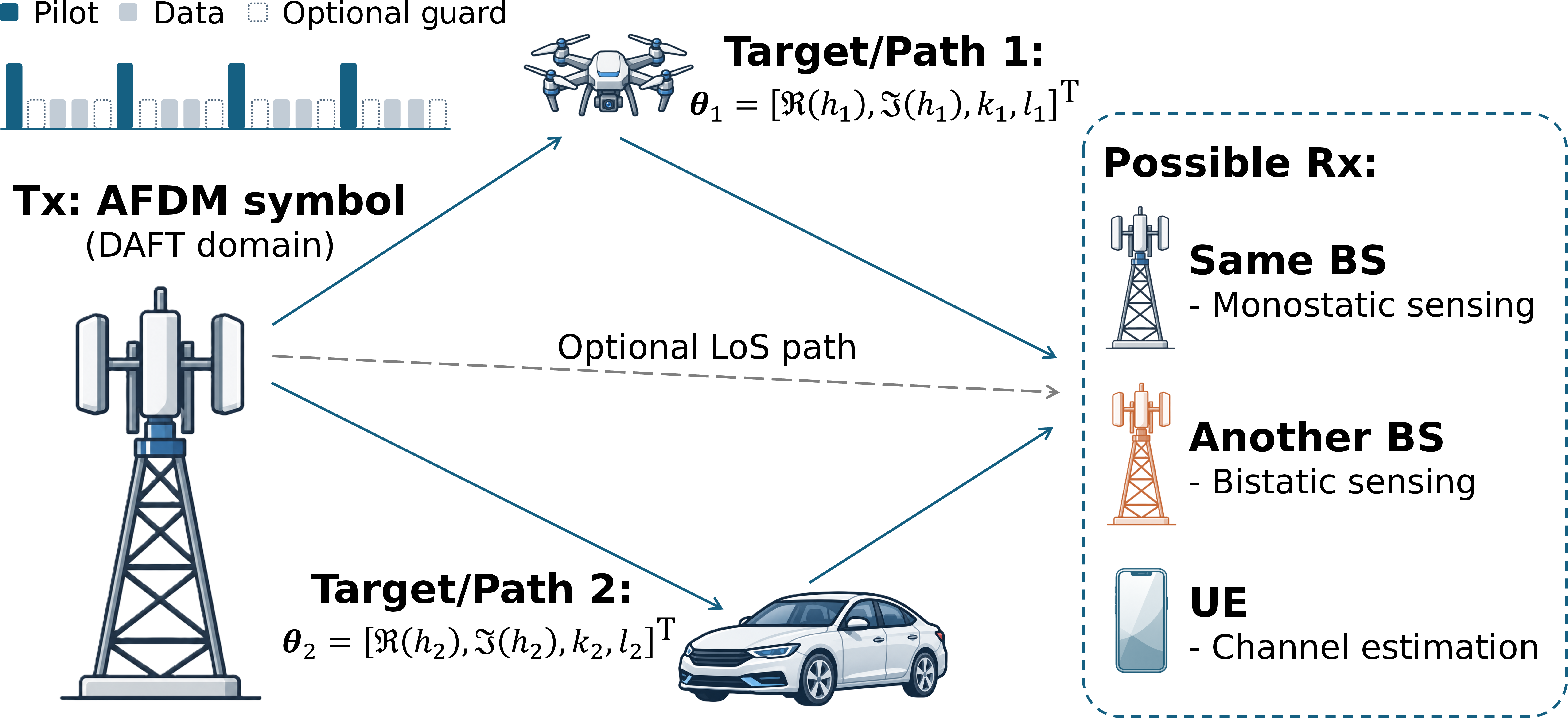}
    \caption{General AFDM parameter-estimation scenario with different types of receivers.}
    \label{fig:scenario}
\end{figure}

The general parameter-estimation scenario considered in this work is illustrated in Fig.~\ref{fig:scenario}. The receiver can be the transmitting base station (BS), another BS, or a user equipment (UE), corresponding to the monostatic sensing, bistatic sensing, and channel estimation scenarios, respectively. 
To facilitate parameter estimation, an embedded-pilot AFDM symbol structure is considered. In this paper, we consider a generalized case that covers both designs with and without guard subcarriers~\cite{bemani2024integrated,EP,sctPilot}.
Hence, the transmitted signal can be written as
\begin{align}
    \mathbf{x} = \mathbf{x}_\text{p} + \mathbf{x}_\text{d},
    \label{xp_xd}
\end{align}
where $\mathbf{x}_\text{p}$ and $\mathbf{x}_\text{d}$ denote the pilot and data vectors in the DAFT domain, respectively. If no guard intervals are inserted, then it holds that $||\mathbf{x}_\text{p}||_0 + ||\mathbf{x}_\text{d}||_0 = N$. Otherwise, $||\mathbf{x}_\text{p}||_0 + ||\mathbf{x}_\text{d}||_0 < N$ holds.
The data vector $\mathbf{x}_\text{d}$ has covariance $\mathrm{Cov}[\mathbf{x}_\text{d}] = \text{diag}(\sigma_\text{d}^2d_n,n=0,1,...,N-1)$, where $\sigma_\text{d}^2$ is the data power and $d_n \in \{0,1\}$ indicates whether the $n$-th subcarrier carries a data symbol. 

Substituting (\ref{xp_xd}) into (\ref{yx_DAFT}), the received signal in the DAFT domain, which serves as the observation signal for parameter estimation, is given by
\begin{align}
    \mathbf{y} = \sum_{i=1}^P ( \mathbf{y}_\text{p}(\bm{\theta}_i) + \mathbf{y}_\text{d}(\bm{\theta}_i) ) + \mathbf{\Tilde{w}}, \label{yp_yd}
\end{align}
where, $\bm{\theta}_i=[\Re(h_i), \Im(h_i), k_i, l_i]^\text{T}\in\bm{\Theta}_i$, $\mathbf{y}_\text{p}(\bm{\theta}_i) = h_i \mathbf{A} \mathbf{\Delta}_{k_i} \mathbf{\Pi}^{l_i} \mathbf{A}^\text{H} \mathbf{x}_\text{p}$, and $\mathbf{y}_\text{d}(\bm{\theta}_i) = h_i \mathbf{A} \mathbf{\Delta}_{k_i} \mathbf{\Pi}^{l_i} \mathbf{A}^\text{H} \mathbf{x}_\text{d}$ denote the parameter set to be estimated of the $i$-th target, the received pilot and data parts, respectively.
The feasible parameter space of each target is defined as $\bm{\Theta}_i\triangleq\bm{\Theta}_h\times\bm{\Theta}_k\times\bm{\Theta}_l$ with $\bm{\Theta}_h=\mathbb{R}^2$ and $\bm{\Theta}_k$ and $\bm{\Theta}_l$
denoting the feasible Doppler and delay domains, respectively.

\begin{remark}
    The signal model provided in this section is mathematically equivalent to that of AFDM channel estimation \cite{AFDM,bemani2024integrated}, where each target corresponds to a propagation path with the same parameters of channel gain, delay and Doppler shift. Therefore, the subsequent analysis directly applies to channel estimation scenarios. In the following sections, we focus on the ISAC scenario and use the term ``targets'' accordingly.
\end{remark}

\section{CRB and MCRB Analysis} \label{sec:CRBandMCRB}
This section develops the theoretical performance bounds for AFDM-ISAC parameter estimation. The CRB under a correctly specified statistical model is first established. The MCRB framework is then introduced for model mismatch and specialized to two practical mismatch cases.

\subsection{CRB for AFDM Parameter Estimation} \label{subsec:CRB}
The observation $\mathbf{y}$ in (\ref{yp_yd}) has a PDF that depends on the parameters to be estimated in general. 
If the PDF assumed by an estimator coincides with the actual one, the statistical model is said to be matched, or perfectly specified. Otherwise, it is referred to as mismatched, or misspecified.

Under perfect specification, all $P$ targets are estimated jointly and the true covariance structure is correctly accounted for at the receiver. The pilot part $\mathbf{x}_\text{p}$ and the effective channel matrix $\mathbf{H}_\text{eff}$ are deterministic, while the data part $\mathbf{x}
_\text{d}$ and the noise $\mathbf{\Tilde{w}}$ are treated as zero-mean random variables. 
In the presence of fractional Doppler and delay, each row of $\mathbf{H}_{\text{eff}}$ contains a sufficiently large number of non-zero elements, so that each component of $\mathbf{H}_\text{eff}\mathbf{x}_\text{d}$ can be expressed as a weighted sum of a large number of independent data symbols.
Invoking the central limit theorem (CLT), the elements of $\mathbf{y}$ can therefore be approximated as Gaussian distributed, and the true PDF of $\mathbf{y}$ is given by $p_Y(\mathbf{y}) = \mathcal{CN}(\bm{\mu}(\bm{\bar{\theta}}),\mathbf{C}(\bm{\bar{\theta}}))$, where $\bm{\bar{\theta}}=[\bm{\bar{\theta}}_1^\text{T},...,\bm{\bar{\theta}}_P^\text{T}]^\text{T}$ denotes the true channel parameters. The mean and covariance are respectively given by
\begin{align}
    \bm{\mu}(\bm{\bar{\theta}}) = \mathbb{E}_p[\mathbf{y}]=\sum_{i=1}^P \mathbf{y}_\text{p}(\bm{\bar{\theta}}_i) = \mathbf{H}_\text{eff}\mathbf{x}_\text{p} \label{E_true}
\end{align}
and
\begin{align}
    \mathbf{C} (\bm{\bar{\theta}})
    &= \mathrm{Cov}[\sum_{i=1}^P \mathbf{y}_\text{d}(\bm{\bar{\theta}}_i)] + \mathrm{Cov}[\mathbf{\Tilde{w}}] \notag \\
    &=\mathbf{H}_\text{eff} \mathrm{Cov}[\mathbf{x}_\text{d}] \mathbf{H}_\text{eff}^\text{H} + \sigma_{\text{n}}^2\mathbf{I}, \label{Cov_true}
\end{align}
where $\sigma_{\text{n}}^2$ denotes the power of the additive white Gaussian noise (AWGN). In general, $\mathbf{C}(\bm{\bar{\theta}})$ is non-diagonal, as the data-induced term $\mathbf{H}_\text{eff} \mathrm{Cov}[\mathbf{x}_\text{d}] \mathbf{H}_\text{eff}^\text{H}$ introduces off-diagonal components.

Unlike existing works on parameter estimation on AFDM-ISAC \cite{zhang2026CRBpilot,bao2024performance,cao2025agile,ni2025twc,luo2025novel,wang2025afdmLEO}, which treat all transmitted symbols as known deterministic quantities and thus reduce the covariance to $\sigma_\text{n}^2\mathbf{I}$, the derivation herein explicitly retains the data-induced covariance term, yielding a more generalized and accurate characterization of the achievable estimation performance for the matched models. Based on the true PDF $p_Y(\mathbf{y})$, the log-likelihood function is given by
\begin{align}
    l(\bar{\bm{\theta}}) 
    =& \ln{p_Y(\mathbf{y})} \notag \\
    =& -(\mathbf{y}-\bm{\mu}(\bar{\bm{\theta}}))^\text{H} \mathbf{C}^{-1}(\bar{\bm{\theta}}) (\mathbf{y}-\bm{\mu}(\bar{\bm{\theta}})) \notag \\
    &- \ln{ \det{\mathbf{C}(\bar{\bm{\theta}})} }- N\ln{\pi}. 
    \label{loglike_true}
\end{align}
The Fisher information matrix (FIM) is defined as
\begin{align}
    \mathbf{F}_{\bar{\bm{\theta}}} \triangleq 
    -\mathbb{E}_p \left[ \frac{\partial^2 l(\bar{\bm{\theta}}) }{\partial\bar{\bm{\theta}}\partial\bar{\bm{\theta}}^\text{T}} \right],
    \label{FIM_def}
\end{align}
where the $(m,n)$-th entry, according to the Slepian-Bangs formula \cite{slepian1954estimation,bangs1971array}, is given by
\begin{align}
    [\mathbf{F}_{\bar{\bm{\theta}}}]_{m,n} = &
    2\Re\left\{ 
    \frac{\partial \bm{\mu}(\bar{\bm{\theta}})^\text{H}}
    {\partial [\bar{\bm{\theta}}]_m} 
    \mathbf{C}^{-1}(\bar{\bm{\theta}}) 
    \frac{\partial \bm{\mu}(\bar{\bm{\theta}})}
    {\partial [\bar{\bm{\theta}}]_n} 
    \right\} \notag\\
    &+\text{tr} \left( \mathbf{C}^{-1}(\bar{\bm{\theta}}) 
    \frac{\partial \mathbf{C}(\bar{\bm{\theta}})}
    {\partial [\bar{\bm{\theta}}]_m} 
    \mathbf{C}^{-1}(\bar{\bm{\theta}}) 
    \frac{\partial \mathbf{C}(\bar{\bm{\theta}})}
    {\partial [\bar{\bm{\theta}}]_n} 
    \right). \label{FIM_res}
\end{align}
Specific expressions for the derivative terms in (\ref{FIM_res}) are provided in Appendix \ref{Appen_deriv_true}. The CRB can then be obtained as
\begin{align}
     \text{CRB}(\bar{\bm{\theta}}) \triangleq 
    \mathbf{F}_{\bar{\bm{\theta}}}^{-1}.
    \label{CRB_def}
\end{align}

\begin{remark}
    For random or unknown data, the data-induced term in (\ref{Cov_true}) becomes dependent on the parameters to be estimated. Thus, the FIM in (\ref{FIM_res}) contains both mean- and covariance-derivative terms, with the latter reflecting parameter information conveyed through second-order statistics. In contrast, when the data are known, their contribution enters the mean in (\ref{E_true}), and the covariance reduces to the parameter-independent noise term, leaving only the mean-derivative term, as in \cite[Eq.~(29)]{zhang2026CRBpilot}. Since the data realizations are more informative than their statistics alone, the CRB with known data is generally lower than its unknown-data counterpart.
\end{remark}

\subsection{MCRB Under Model Mismatch} \label{subsec:MCRB}
Under a mismatched model, the MCRB is adopted to characterize the performance bounds on parameter estimation. For a general model-mismatch case, where the assumed PDF is $f_Y(\mathbf{y}\mid\boldsymbol{\theta})=\mathcal{CN}(\bm{\mu}_\text{a}(\bm{\theta}),\mathbf{C}_\text{a}(\bm{\theta}))$ with the true PDF given by $p_Y(\mathbf{y})=\mathcal{CN}(\bm{\mu}(\bm{\bar{\theta}}),\mathbf{C}(\bm{\bar{\theta}}))$, the log-likelihood function is formulated as
\begin{align}
    l_\text{a}(\bm{\theta}) 
    &= \ln{f_Y(\mathbf{y}|\bm{\theta})} \notag \\
    =& -(\mathbf{y}-\bm{\mu}_\text{a}(\bm{\theta}))^\text{H} \mathbf{C}^{-1}_\text{a}(\bm{\theta}) (\mathbf{y}-\bm{\mu}_\text{a}(\bm{\theta})) \notag \\
    &- \ln{ \det{\mathbf{C}_\text{a}(\bm{\theta})} }- N\ln{\pi}. 
    \label{loglike_gen} 
\end{align}
By defining $\mathbf{J}_{\Tilde{\bm{\theta}}}$ and $\mathbf{K}_{\Tilde{\bm{\theta}}}$ as
\begin{align}
    \mathbf{J}_{\Tilde{\bm{\theta}}} \triangleq 
    -\mathbb{E}_p \left[ \frac{\partial^2 l_\text{a}(\bm{\theta})}{\partial\bm{\theta}\partial\bm{\theta}^\text{T}} 
    \Bigg| _{\bm{\theta}=\Tilde{\bm{\theta}}} \right]
    \label{J_def_gen}
\end{align}
and
\begin{align}
    \mathbf{K}_{\Tilde{\bm{\theta}}} \triangleq 
    \mathbb{E}_p \left[ \frac{\partial l_\text{a}(\bm{\theta})}{\partial\bm{\theta}} 
    \left( \frac{\partial l_\text{a}(\bm{\theta})}{\partial\bm{\theta}} \right)^\text{T} 
    \Bigg| _{\bm{\theta}=\Tilde{\bm{\theta}}} \right],
    \label{K_def_gen}
\end{align}
the MCRB is then given by \cite{MCRB}
\begin{align}
    \text{MCRB}(\Tilde{\bm{\theta}}) \triangleq 
    \mathbf{J}_{\Tilde{\bm{\theta}}}^{-1} \mathbf{K}_{\Tilde{\bm{\theta}}} \mathbf{J}_{\Tilde{\bm{\theta}}}^{-1}.
    \label{MCRB_def_gen}
\end{align}
Here, the MCRB is evaluated at $\Tilde{\bm{\theta}}$, which denotes the pseudotrue parameter vector assumed to be an interior point of the entire parameter space, denoted as $\bm{\Theta}=\bm{\Theta}_1\times\cdots\times\bm{\Theta}_P$. It is defined as the minimizer of the KLD between the true and assumed PDF:
\begin{align}
    \Tilde{\bm{\theta}} \triangleq& 
    \arg\min_{\bm{\theta} \in \bm{\Theta}} D_\text{KL}(p_Y||f_{Y|\bm{\theta}}) \notag\\
    =& \arg\min_{\bm{\theta} \in \bm{\Theta}} -\mathbb{E}_p[l_\text{a}(\bm{\theta})], 
    \label{pseudo_para_def_gen}
\end{align}
where 
$D_\text{KL}(p_Y||f_{Y|\bm{\theta}})$ represents the KLD between $p_Y(\mathbf{y})$ and $f_Y(\mathbf{y}|\bm{\theta})$. 
Under model mismatch, the misspecified maximum likelihood (ML) estimator no longer converges to the true parameter $\bar{\boldsymbol{\theta}}$, but instead converges to the pseudotrue parameter $\tilde{\boldsymbol{\theta}}$, which is termed as the misspecified-unbiased estimator \cite{MCRB}.
Consequently, the misspecified-unbiased estimator is generally biased with $\tilde{\boldsymbol{\theta}} - \bar{\boldsymbol{\theta}}$, and the MCRB provides an LB on the estimator variance (and the MSE if $\tilde{\boldsymbol{\theta}} - \bar{\boldsymbol{\theta}}=\mathbf{0}$) evaluated at $\tilde{\boldsymbol{\theta}}$.
Note that the expectation operator $\mathbb{E}_p[\cdot]$ throughout these expressions is taken w.r.t. the true PDF $p_Y(\mathbf{y})$. 

We next specialize this general MCRB formulation to two practical model-mismatch cases arising in AFDM-ISAC parameter estimation.

\subsubsection{Case 1: Covariance Mismatch Due to Data Leakage} \label{subsubsec:case1}
As shown in (\ref{Cov_true}), the covariance of the received signal contains the data-induced term $\mathbf{H}_\text{eff}(\bar{\boldsymbol{\theta}})\mathrm{Cov}[\mathbf{x}_\text{d}]\mathbf{H}_\text{eff}(\bar{\boldsymbol{\theta}})^H$, which is generally non-diagonal and parameter dependent.
With sufficient guard symbols, the pilot observation can be obtained by truncating the received signal to isolate it from surrounding data symbols \cite{AFDM,bemani2024integrated,li2026matched}. The covariance of the observation used for subsequent estimation then corresponds to a sub-matrix of the original covariance and reduces to a diagonal noise matrix.
In practice, however, when the guard symbols are insufficient, the pilot observation cannot be cleanly separated from the data, and the entire received signal must be retained as the observation.
Since conventional matched-filtering estimators \cite{bemani2024integrated,li2026matched,sctPilot} typically assume a scaled identity covariance, they produce a covariance mismatch.

Assuming a known number of targets and applying the general MCRB framework to this mismatch scenario, we have the assumed PDF $f_Y(\mathbf{y}|\boldsymbol{\theta}) = \mathcal{CN}(\bm{\mu}(\bm{\theta}),\mathbf{C}_\text{a})$. The assumed covariance matrix can be expressed by
\begin{align}
    \mathbf{C}_\text{a} 
    =\sigma_\text{a}^2\mathbf{I} + \sigma_{\text{n}}^2\mathbf{I}, \label{Cov_assume_case1}
\end{align}
where $\sigma_\text{a}^2$ is the assumed received data power.
Then, the log-likelihood function is given by 
\begin{align}
    l_\text{a}(\bm{\theta}) 
    =& -(\mathbf{y}-\bm{\mu}(\bm{\theta}))^\text{H} \mathbf{C}^{-1}_\text{a}(\mathbf{y}-\bm{\mu}(\bm{\theta})) \notag \\
    &- \ln{ \det{\mathbf{C}_\text{a}} }- N\ln{\pi},
    \label{loglike_case1} 
\end{align}
where only the first term is related to the parameter $\bm{\theta}$.

\begin{lemma}
Under Case 1, the pseudotrue parameter coincides with the true parameter, i.e., 
\begin{align} 
    \Tilde{\bm{\theta}} = \arg\min_{\bm{\theta} \in \bm{\Theta}} 
    [\bm{\mu}(\bar{\bm{\theta}})-\bm{\mu}(\bm{\theta})]^\text{H} [\bm{\mu}(\bar{\bm{\theta}})-\bm{\mu}(\bm{\theta})] = \bar{\bm{\theta}}. \label{pseudo_para_case1}
\end{align}
\end{lemma}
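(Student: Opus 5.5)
The plan is to start from the KLD characterization (\ref{pseudo_para_def_gen}) and compute $-\mathbb{E}_p[l_\text{a}(\bm{\theta})]$ explicitly under the Case~1 log-likelihood (\ref{loglike_case1}). Since $\mathbf{C}_\text{a}=(\sigma_\text{a}^2+\sigma_\text{n}^2)\mathbf{I}$ does not depend on $\bm{\theta}$, the terms $\ln\det\mathbf{C}_\text{a}$ and $N\ln\pi$ are constants and can be dropped from the minimization. For the quadratic term, I would write $\mathbf{y}-\bm{\mu}(\bm{\theta}) = (\mathbf{y}-\bm{\mu}(\bar{\bm{\theta}})) + (\bm{\mu}(\bar{\bm{\theta}})-\bm{\mu}(\bm{\theta}))$ and take the expectation under the true PDF $p_Y=\mathcal{CN}(\bm{\mu}(\bar{\bm{\theta}}),\mathbf{C}(\bar{\bm{\theta}}))$. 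The cross terms vanish because $\mathbb{E}_p[\mathbf{y}-\bm{\mu}(\bar{\bm{\theta}})]=\mathbf{0}$ by (\ref{E_true}). This leaves
\begin{align}
-\mathbb{E}_p[l_\text{a}(\bm{\theta})] = \frac{\mathrm{tr}(\mathbf{C}(\bar{\bm{\theta}})) + [\bm{\mu}(\bar{\bm{\theta}})-\bm{\mu}(\bm{\theta})]^\text{H}[\bm{\mu}(\bar{\bm{\theta}})-\bm{\mu}(\bm{\theta})]}{\sigma_\text{a}^2+\sigma_\text{n}^2} + \text{const}. \notag
\end{align}
The trace term does not depend on $\bm{\theta}$, and the positive scalar factor does not change the minimizer. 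This yields the first equality in (\ref{pseudo_para_case1}).

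For the second equality, I would note that the objective is a squared norm and therefore nonnegative. It equals zero at $\bm{\theta}=\bar{\bm{\theta}}$, which lies in $\bm{\Theta}$, so $\bar{\bm{\theta}}$ is a global minimizer.

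The main obstacle is showing that this minimizer is unique, i.e., that $\bm{\mu}(\bm{\theta})=\bm{\mu}(\bar{\bm{\theta}})$ implies $\bm{\theta}=\bar{\bm{\theta}}$; the paper's $\arg\min$ notation implicitly presumes this. By (\ref{E_true}), $\bm{\mu}(\bm{\theta})=\sum_i h_i\mathbf{A}\mathbf{\Delta}_{k_i}\mathbf{\Pi}^{l_i}\mathbf{A}^\text{H}\mathbf{x}_\text{p}$, so the question becomes whether the pilot-response map is identifiable. I would handle this by an explicit identifiability assumption, which is standard in the CRB setting and already needed for the FIM in (\ref{FIM_res}) to be invertible. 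The assumption is that the pilot $\mathbf{x}_\text{p}$ is such that the responses $\mathbf{A}\mathbf{\Delta}_{k}\mathbf{\Pi}^{l}\mathbf{A}^\text{H}\mathbf{x}_\text{p}$ for distinct feasible $(k,l)$ pairs, with up to $2P$ of them, are linearly independent. This holds for nonzero pilots when $c_1$ is chosen as in Section~\ref{sec:input-output}, because distinct delay-Doppler pairs then map to distinct DAFT-domain positions. Uniqueness then follows up to a permutation of the target labels, which is fixed by ordering the targets.

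Under this assumption, equality of the means forces the delay-Doppler pairs to coincide and then the gains to match, giving $\tilde{\bm{\theta}}=\bar{\bm{\theta}}$. I would add a remark that, without identifiability, the statement should be read as saying that $\bar{\bm{\theta}}$ belongs to the set of KLD minimizers.
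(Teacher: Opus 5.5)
Your argument for the first equality is the paper's own proof in Appendix~B. The paper writes $\mathbf{y}-\bm{\mu}(\bm{\theta})$ as the zero-mean term $\mathbf{y}-\bm{\mu}(\bar{\bm{\theta}})$ plus $\bm{\mu}(\bar{\bm{\theta}})-\bm{\mu}(\bm{\theta})$, drops the $\bm{\theta}$-independent terms, and uses $\mathbf{C}_\text{a}\propto\mathbf{I}$. Your nonnegativity argument for the second equality is what the paper leaves implicit. The paper never addresses uniqueness, so raising it is a fair addition.

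One claim in that addition is wrong as stated. Identifiability does \emph{not} hold for every nonzero pilot. Take $\mathbf{x}_\text{p}=\mathbf{A}\mathbf{e}_n$, with $\mathbf{e}_n$ the $n$-th unit vector. Then $\mathbf{A}^\text{H}\mathbf{x}_\text{p}=\mathbf{e}_n$ is a time-domain impulse, and $\mathbf{v}(k,l)$ and $\mathbf{v}(k',l)$ differ only by a unit-modulus factor. So each $k_i$ can be traded against the phase of $h_i$ without changing $\bm{\mu}(\bm{\theta})$, and the set of minimizers is a continuum.

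Your ``distinct DAFT-domain positions'' argument is sound for a single embedded DAFT-domain pilot with integer delays. In that case the chirp periodicity makes $\mathbf{v}(k,l)$ depend on $(k,l)$ only through $k+2Nc_1 l$, up to a phase. Vandermonde independence then gives the claim even for fractional $k$, provided $2k_\text{max}+2Nc_1 l_\text{max}<N$. For the multi-pilot ZC configurations used in the paper, keep it as an explicit assumption, together with nonzero true gains and distinct true delay-Doppler pairs.
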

\begin{proof}
    See Appendix \ref{Appen_JK_case1}.
\end{proof}
Meanwhile, the $(m,n)$-th entries of (\ref{J_def_gen}) and (\ref{K_def_gen}) under this case are respectively given by
\begin{align}
    [\mathbf{J}_{\Tilde{\bm{\theta}}}]_{m,n} 
    =& 2\Re\left\{ 
    \frac{\partial \bm{\mu}(\bm{\theta})^\text{H}}
    {\partial [\bm{\theta}]_m} 
    \mathbf{C}_\text{a}^{-1} 
    \frac{\partial \bm{\mu}(\bm{\theta})}
    {\partial [\bm{\theta}]_n} 
    \right\} 
    \Bigg| _{\bm{\theta}=\Tilde{\bm{\theta}}} \notag \\
    \label{J_res_case1}
\end{align}
and
\begin{align}
    [\mathbf{K}_{\Tilde{\bm{\theta}}}]_{m,n} 
    = 2\Re\left\{
    \frac{\partial \bm{\mu}(\bm{\theta})^\text{H}}
    {\partial [\bm{\theta}]_m} 
    \mathbf{C}_\text{a}^{-1} 
    \mathbf{C}(\bar{\bm{\theta}})
    \mathbf{C}_\text{a}^{-1} 
    \frac{\partial \bm{\mu}(\bm{\theta})}
    {\partial [\bm{\theta}]_n} 
    \right\}
    \Bigg| _{\bm{\theta}=\Tilde{\bm{\theta}}}. \label{K_res_case1}
\end{align}

Since $\Tilde{\bm{\theta}} =\bar{\bm{\theta}}$ holds, the specific expressions of the derivative terms evaluated at $\tilde{\bm{\theta}}$ in (\ref{J_res_case1}) and (\ref{K_res_case1}) are the same as those evaluated at $\bar{\bm{\theta}}$ given in Appendix \ref{Appen_deriv_true}.

\begin{remark}
    Since the assumed covariance matrix $\mathbf{C}_\text{a}$ is a constant multiple of the identity matrix as given by (\ref{Cov_assume_case1}), its inverse $\mathbf{C}_\text{a}^{-1}$ can be factored out as a scalar from the expressions of both $\mathbf{J}_{\Tilde{\bm{\theta}}}$ and $\mathbf{K}_{\Tilde{\bm{\theta}}}$. These scalar factors perfectly cancel each other out during the computation of $\mathbf{J}_{\Tilde{\bm{\theta}}}^{-1} \mathbf{K}_{\Tilde{\bm{\theta}}} \mathbf{J}_{\Tilde{\bm{\theta}}}^{-1}$ in (\ref{MCRB_def_gen}). Consequently, the resulting MCRB is independent of $\mathbf{C}_\text{a}^{-1}$, meaning it is insensitive to the specific values of $\sigma_\text{a}^2$.
\end{remark}

\subsubsection{Case 2: Combined Covariance-and-Mean Mismatch Under Sequential Single-Target Estimation} \label{subsubsec:case2}
When the number of targets is unknown a priori, joint estimation of all target parameters is infeasible.
Instead, a sequential single-target estimation scheme is commonly adopted \cite{AFDM, li2026matched, sctPilot}. 
When estimating the $i$-th target, the contributions of the previously estimated targets have been successively but imperfectly removed from the received signal, whereas the remaining unestimated targets are absent from the assumed signal mean.
This imperfect cancellation and omission introduce a mean mismatch, while the covariance is still approximated by $\mathbf{C}_\text{a}$ as in Case 1, leading to a combined covariance-and-mean mismatch.

In this case, the assumed PDF is $f_Y(\mathbf{y}|\boldsymbol{\theta}) = \mathcal{CN}(\bm{\mu}_{\text{a}}(\bm{\theta}_{i}),\mathbf{C}_\text{a})$, where the assumed mean is formed as
\begin{align}
    \bm{\mu}_{\text{a}}(\bm{\theta}_{i})
    =& \mathbf{y}_\text{p}(\bm{\theta}_{i}) + \sum_{j=1}^{i-1}\mathbf{y}_\text{p}(\Tilde{\bm{\theta}}_{j})
    \notag\\
    =& h_{i} \mathbf{A} \mathbf{\Delta}_{k_{i}} \mathbf{\Pi}^{l_{i}} \mathbf{A}^\text{H} \mathbf{x}_\text{p} 
    + \sum_{j=1}^{i-1} \Tilde{h}_{j} \mathbf{A} \mathbf{\Delta}_{\Tilde{k}_{j}} \mathbf{\Pi}^{\Tilde{l}_{j}} \mathbf{A}^\text{H} \mathbf{x}_\text{p}, \label{E_assume_case2}
\end{align}
which deviates from the true mean $\bm{\mu}(\bm{\bar{\theta}})=\sum_{i=1}^P \mathbf{y}_\text{p}(\bm{\bar{\theta}}_i)$. 
Since the misspecified ML estimator converges to the pseudotrue parameter as the number of observations increases, the previous estimates are replaced by the corresponding pseudotrue parameters $\tilde{\boldsymbol{\theta}}_j$ in (\ref{E_assume_case2}), which are usually biased w.r.t the true parameters.

The log-likelihood function in this case is given by 
\begin{align}
    l_\text{a}(\bm{\theta}_i) 
    =& -(\mathbf{y}-\bm{\mu}_\text{a}(\bm{\theta}_i))^\text{H} \mathbf{C}^{-1}_\text{a}(\mathbf{y}-\bm{\mu}_\text{a}(\bm{\theta}_i)) \notag \\
    &- \ln{ \det{\mathbf{C}_\text{a}} }- N\ln{\pi}.
    \label{loglike_case2} 
\end{align}
\begin{lemma}
Under Case 2, the pseudotrue parameter is characterized as follows:
\begin{align}
    \Tilde{\bm{\theta}}_i 
    =& \arg\min_{\bm{\theta}_i \in \bm{\Theta}_i} -\mathbb{E}_p[l_\text{a}(\bm{\theta}_i)] \notag\\
    =& \arg\min_{\bm{\theta}_i \in \bm{\Theta}_i} 
    (\bm{\mu}(\bar{\bm{\theta}}) - \bm{\mu}_\text{a}(\bm{\theta}_i))^\text{H}(\bm{\mu}(\bar{\bm{\theta}}) - \bm{\mu}_\text{a}(\bm{\theta}_i)), \label{pseudo_para_case2}
\end{align}
\end{lemma}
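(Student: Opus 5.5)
We compute $-\mathbb{E}_p[l_\text{a}(\bm{\theta}_i)]$ in closed form under the true Gaussian PDF $p_Y(\mathbf{y})=\mathcal{CN}(\bm{\mu}(\bar{\bm{\theta}}),\mathbf{C}(\bar{\bm{\theta}}))$. Then we discard every term that does not depend on $\bm{\theta}_i$. This mirrors the argument for Case 1 in Appendix \ref{Appen_JK_case1}. The only differences are that the assumed mean is now $\bm{\mu}_\text{a}(\bm{\theta}_i)$ from (\ref{E_assume_case2}), and that the previously obtained pseudotrue parameters $\Tilde{\bm{\theta}}_j$, $j<i$, are treated as fixed constants.

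First, starting from (\ref{loglike_case2}), we write
\begin{align}
\mathbf{y}-\bm{\mu}_\text{a}(\bm{\theta}_i)=(\mathbf{y}-\bm{\mu}(\bar{\bm{\theta}}))+(\bm{\mu}(\bar{\bm{\theta}})-\bm{\mu}_\text{a}(\bm{\theta}_i)),
\end{align}
where the first bracket is zero mean under $p_Y$ and the second is deterministic. Expanding the quadratic form, the two cross terms vanish in expectation. The remaining terms are:
\begin{itemize}
\item $\mathbb{E}_p[(\mathbf{y}-\bm{\mu}(\bar{\bm{\theta}}))^\text{H}\mathbf{C}_\text{a}^{-1}(\mathbf{y}-\bm{\mu}(\bar{\bm{\theta}}))]=\mathrm{tr}(\mathbf{C}_\text{a}^{-1}\mathbf{C}(\bar{\bm{\theta}}))$, which is independent of $\bm{\theta}_i$;
\item the deterministic term $(\bm{\mu}(\bar{\bm{\theta}})-\bm{\mu}_\text{a}(\bm{\theta}_i))^\text{H}\mathbf{C}_\text{a}^{-1}(\bm{\mu}(\bar{\bm{\theta}})-\bm{\mu}_\text{a}(\bm{\theta}_i))$.
\end{itemize}
The terms $\ln\det\mathbf{C}_\text{a}$ and $N\ln\pi$ are also constant in $\bm{\theta}_i$, since $\mathbf{C}_\text{a}$ in (\ref{Cov_assume_case1}) is parameter independent.

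Second, we use $\mathbf{C}_\text{a}^{-1}=(\sigma_\text{a}^2+\sigma_\text{n}^2)^{-1}\mathbf{I}$. This is a positive scalar multiple of the identity, so it can be pulled out of the quadratic form without changing the minimizer. Therefore
\begin{align}
\arg\min_{\bm{\theta}_i\in\bm{\Theta}_i}-\mathbb{E}_p[l_\text{a}(\bm{\theta}_i)]=\arg\min_{\bm{\theta}_i\in\bm{\Theta}_i}\|\bm{\mu}(\bar{\bm{\theta}})-\bm{\mu}_\text{a}(\bm{\theta}_i)\|^2,
\end{align}
which is exactly (\ref{pseudo_para_case2}). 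The first equality in (\ref{pseudo_para_case2}) is just the definition (\ref{pseudo_para_def_gen}) restricted to the single-target parameter $\bm{\theta}_i$ over $\bm{\Theta}_i$. This restriction holds because the KLD differs from $-\mathbb{E}_p[l_\text{a}]$ only by the entropy of $p_Y$, which does not depend on $\bm{\theta}_i$.

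The main point needing care, rather than a genuine obstacle, is the status of the sequential structure. The $\Tilde{\bm{\theta}}_j$ for $j<i$ enter $\bm{\mu}_\text{a}$ as fixed quantities obtained at earlier stages, not as optimization variables. We should therefore state explicitly that the minimization runs only over $\bm{\theta}_i$, and that the true covariance $\mathbf{C}(\bar{\bm{\theta}})$ depends only on $\bar{\bm{\theta}}$, so it contributes nothing to the minimizer. Unlike Case 1, we cannot conclude $\Tilde{\bm{\theta}}_i=\bar{\bm{\theta}}_i$, because the residual mean generally cannot be driven to zero by a single target term. The lemma therefore only characterizes $\Tilde{\bm{\theta}}_i$ as a least-squares fit of the mean, leaving the explicit solution to Section \ref{sec:MSE}.
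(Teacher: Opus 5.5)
Your proposal is correct and takes essentially the same route as the paper's proof in Appendix~\ref{Appen_JK_case2}. Both split $\mathbf{y}-\bm{\mu}_\text{a}(\bm{\theta}_i)$ into the zero-mean part $\tilde{\mathbf{e}}$ and the deterministic $\Delta\bm{\mu}_\text{a}(\bm{\theta}_i)$, drop the $\bm{\theta}_i$-independent terms, and use that $\mathbf{C}_\text{a}^{-1}$ is a scaled identity; you additionally spell out what the paper leaves implicit (the vanishing cross terms, the constant $\mathrm{tr}(\mathbf{C}_\text{a}^{-1}\mathbf{C}(\bar{\bm{\theta}}))$, and treating $\tilde{\bm{\theta}}_j$, $j<i$, as fixed).
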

\begin{proof}
    See Appendix \ref{Appen_JK_case2}.
\end{proof}
It should be noted that $\tilde{\boldsymbol{\theta}}_i$ no longer coincides with the true parameter $\bar{\boldsymbol{\theta}}_i$ of the $i$-th target because of interference from the unestimated targets and bias in the already estimated targets.

Evaluated at $\Tilde{\bm{\theta}}_i$, the MCRB has the following form:
\begin{align}
    \text{MCRB}_i(\Tilde{\bm{\theta}}_i) \triangleq 
    \mathbf{J}_{\Tilde{\bm{\theta}}_i}^{-1} \mathbf{K}_{\Tilde{\bm{\theta}}_i} \mathbf{J}_{\Tilde{\bm{\theta}}_i}^{-1}.
    \label{MCRB_def_case2} 
\end{align}
Using the derivation provided in Appendix \ref{Appen_JK_case2}, the $(m,n)$-th entries of $\mathbf{J}_{\Tilde{\bm{\theta}}_i}$ and $\mathbf{K}_{\Tilde{\bm{\theta}}_i}$ are respectively given by
\begin{align}
    [\mathbf{J}_{\Tilde{\bm{\theta}}_i}]_{m,n} 
    \triangleq &
    -\mathbb{E}_p \left[ \frac{\partial^2 l_\text{a}(\bm{\theta}_{i})}{\partial[\bm{\theta}_{i}]_m\partial[\bm{\theta}_{i}]_n} 
    \Bigg| _{\bm{\theta}_{i}=\Tilde{\bm{\theta}}_i} \right] \notag\\
    =& 2\Re\left\{ 
    \frac{\partial \bm{\mu}_{\text{a}}(\bm{\theta}_{i})^\text{H}}
    {\partial [\bm{\theta}_{i}]_m} 
    \mathbf{C}_\text{a}^{-1} 
    \frac{\partial \bm{\mu}_{\text{a}}(\bm{\theta}_{i})}
    {\partial [\bm{\theta}_{i}]_n} 
    \right\} 
    \Bigg| _{\bm{\theta}_{i}=\Tilde{\bm{\theta}}_i} \notag \\
    -& 2\Re\left\{ 
    \frac{\partial^2 \bm{\mu}_{\text{a}}(\bm{\theta}_{i})^\text{H}}
    {\partial [\bm{\theta}_{i}]_m \partial [\bm{\theta}_{i}]_n} 
    \mathbf{C}_\text{a}^{-1} 
    (\bm{\mu} (\bar{\bm{\theta}}) - \bm{\mu}_\text{a} (\bm{\theta}_i) )\right\}
    \Bigg| _{\bm{\theta}_{i}=\Tilde{\bm{\theta}}_i} \label{J_res_case2}
\end{align}
and
\begin{align}
    [\mathbf{K}_{\Tilde{\bm{\theta}}_i}]_{m,n} 
    \triangleq &
    \mathbb{E}_p \left[ \frac{\partial l_\text{a}(\bm{\theta}_{i})}{\partial[\bm{\theta}_{i}]_m} 
     \frac{\partial l_\text{a}(\bm{\theta}_{i})}{\partial[\bm{\theta}_{i}]_n}  
    \Bigg| _{\bm{\theta}_{i}=\Tilde{\bm{\theta}}_i} \right] \notag\\
    =& 2\Re\left\{
    \frac{\partial \bm{\mu}_{\text{a}}(\bm{\theta}_{i})^\text{H}}
    {\partial [\bm{\theta}_{i}]_m} 
    \mathbf{C}_\text{a}^{-1} 
    \mathbf{C}(\bar{\bm{\theta}})
    \mathbf{C}_\text{a}^{-1} 
    \frac{\partial \bm{\mu}_{\text{a}}(\bm{\theta}_{i})}
    {\partial [\bm{\theta}_{i}]_n} 
    \right\}
    \Bigg| _{\bm{\theta}_{i}=\Tilde{\bm{\theta}}_i}. \label{K_res_case2}
\end{align}
The specific expressions of the first and second derivative terms in (\ref{J_res_case2}) and (\ref{K_res_case2}) are provided in Appendix \ref{Appen_deriv_assume}.

\begin{remark}
    Like Case~1, the resulting $\text{MCRB}_i$ is independent of the scalar magnitude of $\mathbf{C}_\text{a}$. However, Case~2 yields a separate $\text{MCRB}_i$ for each target and generally introduces estimation bias. The corresponding per-target MSE bounds incorporating this bias are developed in the next section.
\end{remark}

\section{MSE and Pseudotrue Parameter Analysis} \label{sec:MSE}
This section derives an MSE lower bound by combining the MCRB and the estimation bias, and then characterizes the pseudotrue parameter and its associated bias under Case 2 for different prior-knowledge conditions.

By defining the bias vector as $\mathbf{b} \triangleq \bar{\bm{\theta}} - \Tilde{\bm{\theta}}$, the MSE of the estimate of $\bar{\bm{\theta}}$ has the following LB\cite{MCRB}:
\begin{align}
    \text{MSE} \geq \text{MCRB}(\Tilde{\bm{\theta}}) + \mathbf{b} \mathbf{b}^\text{T} \triangleq \text{LB},
    \label{MSE_LB}
\end{align}
where $\text{MSE}\triangleq\mathbb{E}_p[(\hat{\bm{\theta}}-\bar{\bm{\theta}})(\hat{\bm{\theta}}-\bar{\bm{\theta}})^\text{T}]$ and $\hat{\bm{\theta}}$ denotes the estimate of $\bar{\bm{\theta}}$.
This LB consists of two components: the MCRB, characterizing the variance floor under model mismatch, and the squared bias $\mathbf{b} \mathbf{b}^\text{T}$, which is non-zero whenever the misspecified estimator converges to a pseudotrue parameter $\tilde{\bm{\theta}} \neq \bar{\bm{\theta}}$. The behavior of these two components differs across the cases analyzed in Sec. \ref{sec:CRBandMCRB}.

For the matched model in Sec. \ref{subsec:CRB}, the assumed PDF coincides with the true PDF, which implies $\tilde{\bm{\theta}} = \bar{\bm{\theta}}$ and hence $\mathbf{b}=\mathbf{0}$.  
Therefore, the MCRB and LB in (\ref{MSE_LB}) reduces to the standard CRB when the model is correctly specified, resulting in $\text{MSE} \geq \text{CRB}(\bar{\bm{\theta}})$, with equality achieved by the ML estimator.

In Case~1 of Sec. \ref{subsec:MCRB}, although the assumed covariance deviates from the true covariance, the assumed mean remains equal to the true mean, i.e., $\bm{\mu}_\text{a}(\bm{\theta}) = \bm{\mu}(\bm{\theta})$ for all $\bm{\theta}$. Consequently, the pseudotrue parameter satisfies $\tilde{\bm{\theta}} = \bar{\bm{\theta}}$ regardless of the covariance mismatch as shown in (\ref{pseudo_para_case1}), and it can also achieve $\text{LB} = \text{MCRB}(\bar{\bm{\theta}})$ with zero bias, resulting in $\text{MSE} \geq \text{MCRB}(\bar{\bm{\theta}})$.

In Case~2, however, the additional mean mismatch introduced by sequential single-target 
estimation causes $\tilde{\bm{\theta}}_i \neq \bar{\bm{\theta}}_i$ in 
general, giving rise to a non-zero estimation bias. For the $i$-th target, it follows that
\begin{align}
    \text{MSE}_i \geq \text{MCRB}_i(\Tilde{\bm{\theta}}_i) + \mathbf{b}_i \mathbf{b}_i^\text{T} \triangleq \text{LB}_i,
    \label{MSE_LB_case2}
\end{align}
where $\text{MSE}_i\triangleq\mathbb{E}_p[(\hat{\bm{\theta}}_i-\bar{\bm{\theta}}_i)(\hat{\bm{\theta}}_i-\bar{\bm{\theta}}_i)^\text{T}]$, $\mathbf{b}_i \triangleq \bar{\bm{\theta}}_i - \Tilde{\bm{\theta}}_i$ and $\hat{\bm{\theta}}_i = [\Re(\hat{h}_i),\Im(\hat{h}_i),\hat{k}_i,\hat{l}_i]^\text{T}$ denotes the estimate of $\bar{\bm{\theta}}_i$.
To evaluate the LB of $\text{MSE}_i$, the remainder of this section focuses on characterizing the pseudotrue parameter $\tilde{\bm{\theta}}_i$ and its associated bias $\mathbf{b}_i$ under Case 2. 

For analytical convenience, we define $Q(\bm{\theta}_i) \triangleq (\bm{\mu}(\bar{\bm{\theta}}) - \bm{\mu}_\text{a}(\bm{\theta}_i))^\text{H}(\bm{\mu}(\bar{\bm{\theta}}) - \bm{\mu}_\text{a}(\bm{\theta}_i))$ from (\ref{pseudo_para_case2}). 
Under different prior knowledge of the parameters, the analytical form of $\Tilde{\bm{\theta}}_i = \arg\min_{\bm{\theta}_i \in \bm{\Theta}_i} Q(\bm{\theta}_i)$ varies as investigated below.

\subsection{Perfectly Known Doppler and Delay}
We first consider the simplest case, where the receiver has perfect knowledge of the exact Doppler shift $\bar{k}_i$ and delay $\bar{l}_i$ of the $i$-th target. 
The parameter space thus reduces to $\bm{\Theta}_i|_{k_i = \bar{k}_i, l_i = \bar{l}_i} = \bm{\Theta}_h\times\{\bar{k}_i\}\times\{\bar{l}_i\}$, where $\bm{\Theta}_h = \mathbb{R}^2$. 
The only unknown parameter is the complex gain $h_i$, reducing $\bm{\theta}_i$ to a two-dimensional real vector $[\Re\{h_i\},\Im\{h_i\}]^\text{T}$. 
Consequently, only the $2\times2$ sub-block of (\ref{MSE_LB_case2}) is relevant in this case, i.e.,
\begin{align}
    \text{MSE}_{i}^{(h)}
    \geq
    \text{MCRB}_{i}^{(h)}(\tilde{h}_i)
    + \mathbf{b}_{i}^{(h)}\mathbf{b}_{i}^{(h) \text{T}}
    \triangleq
    \text{LB}_{i}^{(h)},
\end{align}
where $\text{MSE}_{i}^{(h)}$, $\text{MCRB}_{i}^{(h)}(\tilde{h}_i)$, $\text{LB}_{i}^{(h)}$ and $\mathbf{b}_{i}^{(h)}$ denote the corresponding sub-blocks and sub-vector.

The assumed signal mean becomes $\bm{\mu}_\text{a}(\bm{\theta}_i) = h_i \bar{\mathbf{v}}_i + \sum_{j=1}^{i-1}\tilde{h}_j \bar{\mathbf{v}}_j$,  where $\bar{\mathbf{v}}_i \triangleq \mathbf{A} \mathbf{\Delta}_{\bar{k}_i} \mathbf{\Pi}^{\bar{l}_i} \mathbf{A}^\text{H} \mathbf{x}_\text{p}$.
The optimization problem in (\ref{pseudo_para_case2}) can then be simplified to finding the complex scalar $h_i$ that minimizes the following cost function:
\begin{align}
    Q(h_i) = (\bm{\mu}_i(\bar{\bm{\theta}}) - h_i \bar{\mathbf{v}}_i)^\text{H}(\bm{\mu}_i(\bar{\bm{\theta}}) - h_i \bar{\mathbf{v}}_i), \label{pseudo_fun_h}
\end{align}
where $\bm{\mu}_i(\bar{\bm{\theta}}) = \bm{\mu}(\bar{\bm{\theta}}) - \sum_{j=1}^{i-1}\tilde{h}_j \bar{\mathbf{v}}_j$.
By expanding (\ref{pseudo_fun_h}) and making the partial derivatives equal to $0$, 
the closed-form solution for the pseudotrue complex gain can be obtained as
\begin{align}
    \Tilde{h}_i = \frac{\bar{\mathbf{v}}_i^\text{H} \bm{\mu}_i(\bar{\bm{\theta}})}{||\bar{\mathbf{v}}_i||^2}. \label{pseudo_res_h_klKnw}
\end{align}
\begin{remark}
    By substituting $\bm{\mu}(\bar{\bm{\theta}}) = \bar{h}_i \bar{\mathbf{v}}_i + \sum_{j=1,j\neq i}^P \bar{h}_j \bar{\mathbf{v}}_j$ into $\bm{\mu}_i(\bar{\bm{\theta}})$ in (\ref{pseudo_res_h_klKnw}), $\Tilde{h}_i$ can be explicitly decomposed into three components:
    \begin{align}
        \Tilde{h}_i = \bar{h}_i 
        + \sum_{j= 1}^{i-1} (\bar{h}_j - \tilde{h}_j)
        \frac{\bar{\mathbf{v}}_i^\text{H} \bar{\mathbf{v}}_j}
        {||\bar{\mathbf{v}}_i||^2}
        + \sum_{j= i+ 1}^P \bar{h}_j 
        \frac{\bar{\mathbf{v}}_i^\text{H} \bar{\mathbf{v}}_j}
        {||\bar{\mathbf{v}}_i||^2}.
        \label{pseudo_res_h_klKnw_decom}
    \end{align}
    This result indicates that even when the delay and Doppler shift are perfectly known, the pseudotrue complex gain can remain biased under sequential single-target estimation. To isolate the source of this bias, consider an idealized scenario where all previously attained pseudotrue parameters are error-free, i.e., $\tilde{h}_j = \bar{h}_j$ for $j < i$. Under this stronger assumption, $\Tilde{h}_i$ is composed of the true complex gain $\bar{h}_i$ plus the last term in (\ref{pseudo_res_h_klKnw_decom}), which represents deterministic interference from all targets not yet estimated. Designing pilot sequences that make this interference term zero or approximately zero constitutes an interesting direction for future work.
\end{remark}

\subsection{Known Delay but Unknown Doppler}
Having established the case where both delay and Doppler shifts are perfectly known, we now relax the assumption on the Doppler knowledge. 
In some of the existing works on AFDM systems \cite{AFDM,EP,sctPilot,SP}, 
channels are typically characterized by integer delay taps and fractional Doppler shifts.
Consequently, the delay estimation tends to be significantly more accurate than Doppler estimation. Therefore, we next consider the case where the delay is perfectly estimated, but the complex gain and Doppler shift are still required to be estimated. 

In this case, the parameter space reduces to $\bm{\Theta}_i|_{l_i = \bar{l}_i} = \bm{\Theta}_h\times\bm{\Theta}_k\times\{\bar{l}_i\}$, where $\bm{\Theta}_h = \mathbb{R}^2$ and $\bm{\Theta}_k = [-k_\text{max}, k_\text{max}]$. 
Accordingly, $\bm{\theta}_i$ reduces to a three-dimensional real vector $[\Re\{h_i\},\Im\{h_i\},k_i]^\text{T}$ and the $3\times3$ sub-block of (\ref{MSE_LB_case2}) is relevant, given by
\begin{align}
    \text{MSE}_i^{(h,k)}
    \geq
    \text{MCRB}_i^{(h,k)}(\tilde{h}_i,\tilde{k}_i)
    + \mathbf{b}_i^{(h,k)}\mathbf{b}_i^{(h,k)\text{T}}
    \triangleq
    \text{LB}_i^{(h,k)},
\end{align}
where $\text{MSE}_i^{(h,k)}$, $\text{MCRB}_i^{(h,k)}(\tilde{h}_i,\tilde{k}_i)$, $\text{LB}_i^{(h,k)}$ and $\mathbf{b}_i^{(h,k)}$ denote the corresponding sub-blocks and sub-vector.

The assumed mean can be expressed as $\bm{\mu}_\text{a}(\bm{\theta}_i) = h_i \mathbf{v}(k_i) + \sum_{j=1}^{i-1}\tilde{h}_j \mathbf{v}(\tilde{k}_j)$,  where $\mathbf{v}(k_i)\triangleq \mathbf{A} \mathbf{\Delta}_{k_i} \mathbf{\Pi}^{\bar{l}_i} \mathbf{A}^\text{H} \mathbf{x}_\text{p}$. 
The optimization problem in (\ref{pseudo_para_case2}) can then be simplified to finding the complex scalar $h_i$ and real scalar $k_i$ that minimizes the cost function as follows:
\begin{align}
    Q(h_i,k_i) = (\bm{\mu}_i(\bar{\bm{\theta}}) - h_i \mathbf{v}(k_i))^\text{H}(\bm{\mu}_i(\bar{\bm{\theta}}) - h_i \mathbf{v}(k_i)), \label{pseudo_fun_hk}
\end{align}
where $\bm{\mu}_i(\bar{\bm{\theta}}) = \bm{\mu}(\bar{\bm{\theta}}) - \sum_{j=1}^{i-1}\tilde{h}_j \mathbf{v}(\tilde{k}_j)$.

To solve (\ref{pseudo_para_case2}) with the cost function in (\ref{pseudo_fun_hk}), we apply the separable least squares approach by first optimizing the linear parameter $h_i$. By taking the partial derivatives of $Q(h_i, k_i)$ with respect to $\Re\{h_i\}$ and $\Im\{h_i\}$ and equating them to zero, the conditional optimal complex gain for a given Doppler shift $k_i$ can be expressed as:
\begin{align}
    \Tilde{h}_i(k_i) = \frac{\mathbf{v}(k_i)^\text{H} \bm{\mu}_i(\bar{\bm{\theta}})}
    {||\mathbf{v}(k_i)||^2}. \label{pseudo_res_hk_lKnw}
\end{align}

By substituting $\Tilde{h}_i(k_i)$ back into (\ref{pseudo_fun_hk}), the complex gain $h_i$ can be eliminated, and the cost function solely relying on the Doppler shift $k_i$ is obtained as:
\begin{align}
    Q(k_i) = ||\bm{\mu}_i(\bar{\bm{\theta}})||^2 - 
    \frac{|\mathbf{v}(k_i)^\text{H}\bm{\mu}_i(\bar{\bm{\theta}})|^2}{||\mathbf{v}(k_i)||^2}. \label{pseudo_fun_hk_onlyk}
\end{align}
Since the term $\|\bm{\mu}(\bar{\boldsymbol{\theta}})\|^2$ is independent of $k_i$, minimizing $Q(k_i)$ in (\ref{pseudo_fun_hk_onlyk}) is mathematically equivalent to maximizing its second term.
Furthermore, in AFDM systems, the matrix $\mathbf{A}$ is a unitary matrix. Given that both the Doppler phase shift matrix $\boldsymbol{\Delta}_{k_i}$ and the delay cyclic shift matrix $\boldsymbol{\Pi}^{\bar{l}_i}$ are also unitary, the norm of the denominator in this second term simplifies to a constant, i.e., $\|\mathbf{v}(k_i)\|^2 = \|\mathbf{x}_{\text{p}}\|^2$.
Consequently, the optimal Doppler shift can be obtained by finding the peak of a matched filter output, i.e.,
\begin{align}
    \Tilde{k}_i = \arg\max_{k_i\in\bm{\Theta}_k} |\mathbf{v}(k_i)^\text{H}\bm{\mu}_i(\bar{\bm{\theta}})|^2. \label{pseudo_res_k_lKnw}
\end{align}

Once the pseudotrue Doppler shift $\Tilde{k}_i$ is determined, the corresponding pseudotrue complex gain is readily evaluated by back-substitution, yielding $\Tilde{h}_i = \Tilde{h}_i(\Tilde{k}_i)$.

\begin{remark} \label{rmk_k_expan}
    Expanding (\ref{pseudo_res_k_lKnw}) and assuming the perfect previously attained pseudotrue parameters, i.e., $(\tilde{h}_j, \tilde{k}_j)= (\bar{h}_j,\bar{k}_j)$ for $j < i$, we have 
    \begin{align}
        \Tilde{k}_i =& \arg\max_{k_i\in\bm{\Theta}_k} |\bar{h}_i \gamma(k_i,\bar{k}_i,\bar{l}_i,\bar{l}_i) \notag\\
        &+ \sum_{j=i+1}^P \bar{h}_j \gamma(k_i,\bar{k}_j,\bar{l}_i,\bar{l}_j)|^2,
        \label{pseudo_res_k_lKnw_expan}
    \end{align}
    where $\gamma(k_i,\bar{k}_i,l_i,\bar{l}_i) \triangleq \mathbf{x}_\text{p}^\text{H} \mathbf{A} \mathbf{\Pi}^{-l_i} \mathbf{\Delta}_{\bar{k}_i - k_i} \mathbf{\Pi}^{\bar{l}_i} \mathbf{A}^\text{H} \mathbf{x}_\text{p}$.
    It can be observed from (\ref{pseudo_res_k_lKnw_expan}) that when the summation term (representing the inter-target interference) equals or approaches zero, the equality $\tilde{k}_i = \bar{k}_i$ is mathematically guaranteed.
    Note that the summation term in (\ref{pseudo_res_k_lKnw_expan}) can be strictly or approximately eliminated by carefully designing dedicated sequences with ideal correlation properties. The design of pilot sequences with such ideal correlation properties constitutes an interesting direction for future work.
\end{remark}

\subsection{Unknown Doppler and Delay}
Finally, we consider the most severe, yet realistic, mismatch scenario of Case 2, where neither the delay nor the Doppler shift of the target is perfectly known to the receiver before the estimation.

In this case, $\bm{\theta}_i$ satisfies $\bm{\theta}_i \in \bm{\Theta}_i  = \bm{\Theta}_h\times\bm{\Theta}_k\times\bm{\Theta}_{l}$, where $\bm{\Theta}_{l} = [0, l_\text{max}]$ denotes the feasible search space for delays. 
The assumed mean has the following form $\bm{\mu}_\text{a}(\bm{\theta}_i) = h_i \mathbf{v}(k_i,l_i) + \sum_{j=1}^{i-1}\tilde{h}_j \mathbf{v}(\tilde{k}_j, \tilde{l}_i)$,  where $\mathbf{v}(k_i,l_i) \triangleq \mathbf{A} \mathbf{\Delta}_{k_i} \mathbf{\Pi}^{l_i} \mathbf{A}^\text{H} \mathbf{x}_\text{p}$.

Following a similar separable least squares methodology as before, we can decouple the estimation of the linear complex gain from the non-linear delay and Doppler parameters. For a given pair of $k_i$ and $l_i$, the conditional optimal estimate of $h_i$ is derived by minimizing the corresponding cost function $Q(\bm{\theta}_i)$, which yields:
\begin{align}
    \Tilde{h}_i(k_i,l_i) = \frac{\mathbf{v}(k_i,l_i)^\text{H} \bm{\mu}_i(\bar{\bm{\theta}})}
    {||\mathbf{v}(k_i,l_i)||^2}, \label{pseudo_res_hkl}
\end{align}
with $\bm{\mu}_i(\bar{\bm{\theta}}) = \bm{\mu}(\bar{\bm{\theta}}) - \sum_{j=1}^{i-1}\tilde{h}_j \mathbf{v}(\tilde{k}_j,\tilde{l}_j)$.

By substituting $\Tilde{h}_i(k_i,l_i)$ back into the original optimization problem, the linear parameter $h_i$ is eliminated. The joint estimation of the Doppler shift and delay is then transformed into minimizing a similar objective function as (\ref{pseudo_fun_hk_onlyk}) with $\mathbf{v}(k_i)$ replaced by $\mathbf{v}(k_i,l_i)$. Consequently, the optimal parameter pair $(\Tilde{k}_i, \Tilde{l}_i)$ can be found via a two-dimensional joint search over the parameter space $\bm{\Theta}_k \times \bm{\Theta}_{l}$, i.e., 
\begin{align}
    (\Tilde{k}_i,\Tilde{l}_i) = \arg\max_{k_i\in\bm{\Theta}_k, l_i\in\bm{\Theta}_{l}} |\mathbf{v}(k_i,l_i)^\text{H}\bm{\mu}_i(\bar{\bm{\theta}})|^2. \label{pseudo_res_kl}
\end{align}
Subsequently, the pseudotrue complex gain is obtained by back-substituting the estimated pair $(\Tilde{k}_i,\Tilde{l}_i)$, resulting in $\Tilde{h}_i = \Tilde{h}_i(\Tilde{k}_i,\Tilde{l}_i)$.

\begin{remark}
    Similar to Remark \ref{rmk_k_expan}, (\ref{pseudo_res_kl}) can be expanded as
    \begin{align}
        (\Tilde{k}_i,\Tilde{l}_i) =& \arg\max_{k_i\in\bm{\Theta}_k, l_i\in\bm{\Theta}_{l}} 
        |\bar{h}_i \gamma(k_i,\bar{k}_i,l_i,\bar{l}_i) \notag\\
        &+ \sum_{j=i+1}^P \bar{h}_j \gamma(k_i,\bar{k}_j,l_i,\bar{l}_j)|^2
        \label{pseudo_res_kl_expan}
    \end{align}
    under the assumption of perfect previous estimates, i.e., $(\tilde{h}_j, \tilde{k}_j, \tilde{l}_j) = (\bar{h}_j, \bar{k}_j, \bar{l}_j)$ for $j < i$.
    When the summation term in (\ref{pseudo_res_kl_expan}) equals or approaches zero, it can be mathematically guaranteed that $(\tilde{k}_i, \tilde{l}_i) = (\bar{k}_i, \bar{l}_i)$. 
    The dedicated design of sequences with ideal two-dimensional correlation properties tailored to (\ref{pseudo_res_kl_expan}) remains an open topic for future work.
\end{remark}

\section{Simulation Results and Discussion} \label{sec:simu}
In this section, numerical simulations are conducted to validate the CRB and MCRB derived in Sec.~\ref{sec:CRBandMCRB} and the corresponding LB on MSE derived in Sec.~\ref{sec:MSE}. Based on this validation, the CRB, MCRB, and LB are further compared under different model-mismatch cases and system parameters. Without loss of generality, the simulation parameters are summarized in Table~\ref{tab:sim_params}. The considered AFDM system consists of $N$ subcarriers with chirp parameters $c_1 = \frac{2(k_\text{max}+k_\nu)+1}{2N}$ and $c_2 = \pi$. Unless otherwise specified, the number of targets is set to $P=3$. For each target, the delay and Doppler shift are drawn uniformly from $[0, l_\text{max}]$ and $[-k_\text{max}, k_\text{max}]$, respectively. The complex gain is randomly generated
from the complex Gaussian distribution $\mathcal{CN}(0, \frac{1}{P})$. Note that the simulation results and discussions are also applicable to channel estimation in AFDM systems.

\begin{table}[!t]
  \caption{Simulation Parameters}
  \label{tab:sim_params}
  \centering
  \begin{tabular}{lc}
    \toprule
    \textbf{Parameter} & \textbf{Value} \\
    \midrule
    Number of subcarriers $N$ & 256 \\
    Carrier frequency $f_\text{c}$ & \SI{50}{GHz} \\
    Subcarrier spacing $\Delta f$ & \SI{60}{kHz} \\
    Sampling time $T_\text{s}$ & \SI{65.1}{ns} \\
    Maximum velocity $v_\text{max}$ & \SI{648}{km/h} \\
    Maximum range $R_\text{max}$ & \SI{97.7}{m} \\
    Maximum normalized Doppler $k_\text{max}$ & 1 \\
    Maximum normalized delay $l_\text{max}$ & 10 \\
    $k_\nu$ & 4\\
    \bottomrule
  \end{tabular}
\end{table}

Pilots are placed among the $N$ subcarriers of one AFDM symbol as uniformly as possible, with no guard intervals inserted. For a given pilot length $L_\text{p} \in \{1, \dots, \lfloor N/4 \rfloor\}$, the consecutive pilot indices differ by  $d=\left\lfloor (N - L_\text{p})/L_\text{p} \right\rfloor + 1$. Any remaining subcarriers are assigned to the final segment following the last pilot, whose length may therefore exceed $d-1$. Since different values of $L_\text{p}$ may yield the same $d$, only the largest $L_\text{p}$ corresponding to each unique $d$ is retained for simulations.
Without further clarification, the power of each pilot symbol is set as \SI{30}{dB} higher than the average power of data symbols, while the data SNR is set as \SI{20}{dB}. The pilot sequence is chosen as the Zadoff-Chu (ZC) sequence with the root of $L_\text{p}-1$. 

\subsection{Validation of CRB} \label{subsec:vali_CRB}

\begin{figure}
    \centering
    \includegraphics[width=0.75\linewidth]{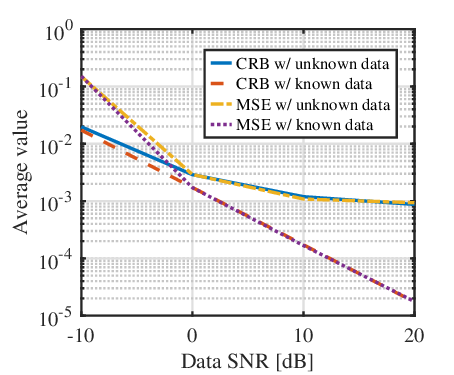}
    \caption{Validation of CRB when the data are unknown or known at the receiver.}
    \label{fig:CRB_MSE_DataSNR}
\end{figure}

We first validate the CRB derived in Sec. \ref{subsec:CRB}, by comparing it with the empirical MSE in Fig.~\ref{fig:CRB_MSE_DataSNR}. The CRB with unknown data is obtained from the FIM in (\ref{FIM_res}), whereas the CRB with known data retains only the mean-derivative term, as in \cite{zhang2026CRBpilot}.

For the empirical MSE with unknown data, the parameter vector $\bm{\theta}$ is estimated at different data SNRs by maximizing the log-likelihood in (\ref{loglike_true}) through a multidimensional search over $\bm{\theta}$. In the case of known data, a matched-filtering-based estimator as in \cite{bemani2024integrated} is adopted. Owing to the high cost of the grid searches, this validation considers a single target and uses 10 channel parameter sets and 200 independent data-and-noise realizations for each channel sample and SNR. The overall CRB and MSE shown in Fig.~\ref{fig:CRB_MSE_DataSNR} are the sums of the corresponding metrics for the four entries of $\bm{\theta}$ and are averaged over channel realizations.

Except in the low-SNR threshold region, the MSE of each matched estimator closely approaches its corresponding CRB, validating both CRB calculations. The known data provide additional Fisher information and therefore yield a lower CRB than the unknown-data case. Moreover, the CRB and MSE with known data continue to decrease with SNR, whereas the data-induced uncertainty causes those with unknown data to approach a clear high-SNR floor.

\subsection{Validation of MCRB and LB} \label{subsec:vali_MCRBLB}

\begin{figure*}[tb]
    \centering
    \begin{subfigure}{0.30\textwidth}
        \centering
        \includegraphics[width=\linewidth]{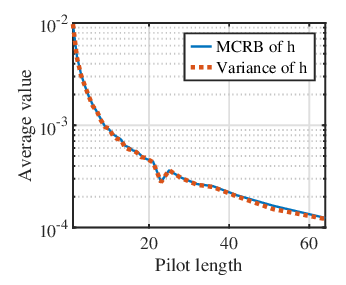}
        \caption{MCRB and variance of complex gains.}
        \label{fig:MCRB_Var_h}
    \end{subfigure}
    \hfill
    \begin{subfigure}{0.30\textwidth}
        \centering
        \includegraphics[width=\linewidth]{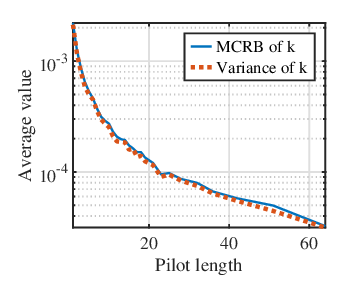}
        \caption{MCRB and variance of Doppler shifts.}
        \label{fig:MCRB_Var_k}
    \end{subfigure}
    \hfill
    \begin{subfigure}{0.30\textwidth}
        \centering
        \includegraphics[width=\linewidth]{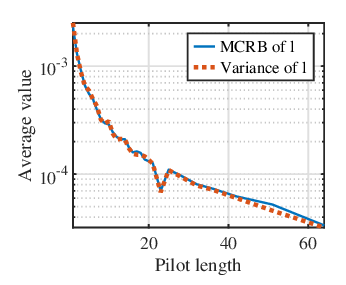}
        \caption{MCRB and variance of delays.}
        \label{fig:MCRB_Var_l}
    \end{subfigure}

    \vspace{0.5em}

    \begin{subfigure}{0.30\textwidth}
        \centering
        \includegraphics[width=\linewidth]{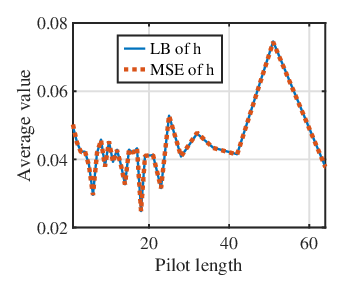}
        \caption{LB and MSE of complex gains.}
        \label{fig:LB_MSE_h}
    \end{subfigure}
    \hfill
    \begin{subfigure}{0.30\textwidth}
        \centering
        \includegraphics[width=\linewidth]{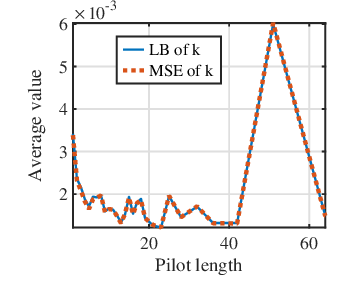}
        \caption{LB and MSE of Doppler shifts.}
        \label{fig:LB_MSE_k}
    \end{subfigure}
    \hfill
    \begin{subfigure}{0.30\textwidth}
        \centering
        \includegraphics[width=\linewidth]{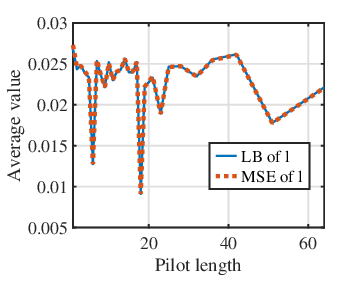}
        \caption{LB and MSE of delays.}
        \label{fig:LB_MSE_l}
    \end{subfigure}
    
    \caption{Validation of the MCRB and LB when all parameters are unknown. 
    The top row compares the MCRB with the variance of the ML estimator; the bottom row
    compares the LB with the MSE of the ML estimator for each parameter.}
    
    \label{fig:LB_MSE_MCRB_Var_compare}
\end{figure*}

Next, the MCRB and LB are compared with the empirical variance and MSE of the misspecified-unbiased, matched-filtering-based ML estimator under sequential single-target estimation. For the $i$-th target, the estimator minimizes
$(\mathbf{y}_i - h_i \mathbf{v}(k_i,l_i))^\text{H}
 (\mathbf{y}_i - h_i \mathbf{v}(k_i,l_i))$,
where
$\mathbf{y}_i=\mathbf{y} - \sum_{j=1}^{i-1}\mathbf{y}_\text{p}(\hat{\bm{\theta}}_{j})$
is the residual signal after subtracting the previously estimated targets. The delay and Doppler estimates are obtained by the two-dimensional grid search
$(\hat{k}_i,\hat{l}_i) = \arg\max_{k_i\in\bm{\Theta}_k,\, l_i\in\bm{\Theta}_{l}}
 |\mathbf{v}(k_i,l_i)^\text{H}\mathbf{y}_i|^2$,
followed by the closed-form complex-gain estimate
$\hat{h}_i(\hat{k}_i,\hat{l}_i)
 = \frac{\mathbf{v}(\hat{k}_i,\hat{l}_i)^\text{H} \mathbf{y}_i}
        {\|\mathbf{v}(\hat{k}_i,\hat{l}_i)\|^2}$.
All targets are assumed to be identified successfully, so that the comparison focuses on estimation accuracy rather than target detection.

The matrix-form LB and MCRB are reduced to per-parameter values by extracting the sub-block corresponding to the complex gain and the diagonal entries corresponding to the Doppler shift and delay in each per-target matrix. For example, the LB values are computed as $\text{LB}^{(h)} = \sum_{i=1}^{P}\mathrm{tr}\!\left([\text{LB}_i]_{1:2,\,1:2}\right)$, $\text{LB}^{(k)} = \sum_{i=1}^{P}[\text{LB}_i]_{3,3}$, and $\text{LB}^{(l)} = \sum_{i=1}^{P}[\text{LB}_i]_{4,4}$. The MCRB is processed analogously. The resulting LB, MCRB, empirical variance, and empirical MSE are then averaged over channel realizations. Owing to the high cost of the two-dimensional grid search, this validation uses 10 channel parameter sets and 200 independent data-and-noise realizations for each channel sample.

When $\{h_i, k_i, l_i\}$ are all unknown, the empirical variances closely match the MCRBs in Fig.~\ref{fig:MCRB_Var_h} to \ref{fig:MCRB_Var_l}, and the empirical MSEs closely match the corresponding LBs in Fig.~\ref{fig:LB_MSE_h} to \ref{fig:LB_MSE_l}. This consistency confirms the effectiveness of the derived bounds. The MCRB/variance curves decrease clearly as the pilot length increases, whereas the LB/MSE curves fluctuate more noticeably because they are dominated by the channel-dependent squared bias. Since this subsection only verifies the bounds, a relatively small number of channel samples is sufficient; Sec.~\ref{subsec:MCRBLBvsPilot} uses a much larger number of channel realizations to characterize the average performance trends.

\subsection{Impact of the Pilots on CRB, MCRB, and LB} \label{subsec:MCRBLBvsPilot}

\begin{figure*}[tb]
    \centering
    \begin{subfigure}{0.30\textwidth}
        \centering
        \includegraphics[width=\linewidth]{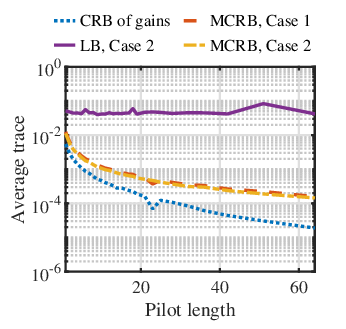}
        \caption{CRB, MCRB and LB of complex gains with all channel samples.}
        \label{fig:CRB_h_allCh}
    \end{subfigure}
    \hfill
    \begin{subfigure}{0.30\textwidth}
        \centering
        \includegraphics[width=\linewidth]{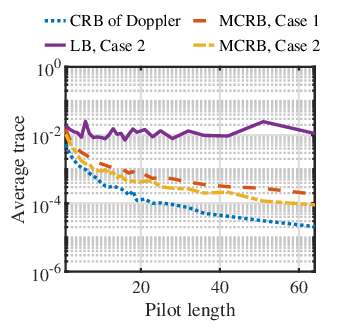}
        \caption{CRB, MCRB and LB of Doppler shifts with all channel samples.}
        \label{fig:CRB_k_allCh}
    \end{subfigure}
    \hfill
    \begin{subfigure}{0.30\textwidth}
        \centering
        \includegraphics[width=\linewidth]{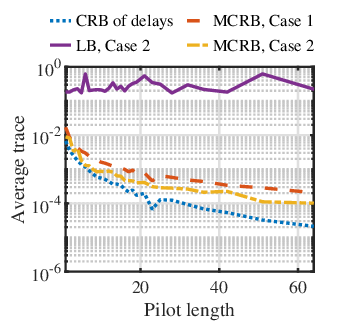}
        \caption{CRB, MCRB and LB of delays with all channel samples.}
        \label{fig:CRB_l_allCh}
    \end{subfigure}

    \vspace{0.5em}

    \begin{subfigure}{0.30\textwidth}
        \centering
        \includegraphics[width=\linewidth]{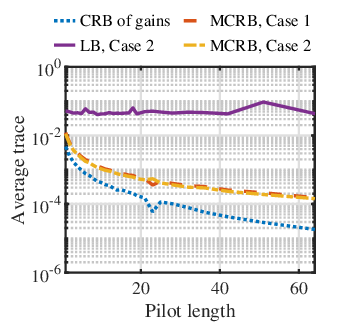}
        \caption{CRB, MCRB and LB of complex gains with weak-path samples excluded.}
        \label{fig:CRB_h_exclu}
    \end{subfigure}
    \hfill
    \begin{subfigure}{0.30\textwidth}
        \centering
        \includegraphics[width=\linewidth]{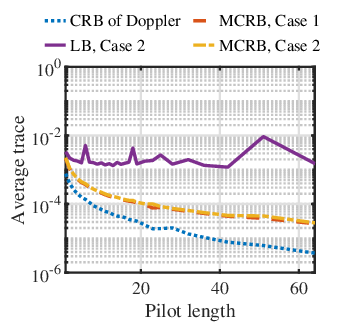}
        \caption{CRB, MCRB and LB of Doppler shifts with weak-path samples excluded.}
        \label{fig:CRB_k_exclu}
    \end{subfigure}
    \hfill
    \begin{subfigure}{0.30\textwidth}
        \centering
        \includegraphics[width=\linewidth]{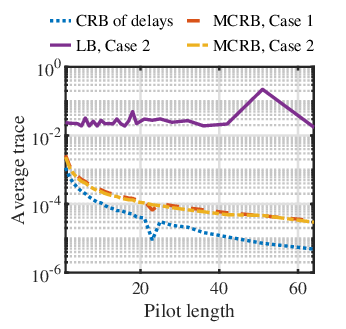}
        \caption{CRB, MCRB and LB of delays with weak-path samples excluded.}
        \label{fig:CRB_l_exclu}
    \end{subfigure}
    
    \caption{Average trace of the CRB, MCRB, and LB versus pilot length under the matched case and two mismatched cases.
    The top row shows results from all channel samples; the bottom row excludes the weak-path samples (in which any path has power below $10\%$ of the strongest path).}
    
    \label{fig:CRB_MCRB_LB_PilotLen}
\end{figure*}

Finally, we investigate the impact of pilot configurations under the matched model and the two misspecified models. Case~1 corresponds to covariance mismatch due to data leakage, for which the squared-bias term vanishes and the LB reduces to the MCRB. Case~2 additionally includes the mean mismatch caused by sequential single-target estimation, so that the LB contains a nonzero bias term. In Fig.~\ref{fig:CRB_MCRB_LB_PilotLen}, each point is averaged over $1000$ channel realizations. The first row uses all generated channel samples, whereas the second row excludes the samples in which at least one path has power below $10\%$ of the strongest path.

The main observations are summarized as follows.
\begin{itemize}
    \item The CRB gives the lowest curves under the perfectly specified model, but it is overly optimistic and no longer a valid performance bound under model mismatch. The MCRBs and especially the Case~2 LB quantify the corresponding performance loss.
    \item Increasing the pilot length reduces the CRB and MCRB curves, but it does not produce a clear downward trend for the Case~2 LB. Thus, in the sequential single-target mismatched estimator, the total error is mainly limited by bias rather than by variance.
    \item Removing weak-path samples has little effect on the complex-gain curves, but it reduces the Doppler and delay bounds by nearly one order of magnitude. Hence, weak paths dominate the Doppler/delay variance and also contribute strongly to the Case~2 bias.
    \item For Doppler and delay, after weak-path samples are excluded, the Case~1 MCRB changes from being slightly higher than the Case~2 MCRB to being close to it. This indicates that Case~1 is more sensitive to weak paths.
\end{itemize}

Some pronounced peaks of the Case~2 LB persist after the weak-path exclusion, especially around pilot length $51$. They arise from pilot-dependent sidelobes in the autocorrelation of $\mathbf{v}(k_i,l_i)$ used by the two-dimensional matched-filter search, which can approach the main peak and increase the probability of selecting incorrect grid points. Hence, pilot lengths in a smooth LB region, such as $[1,5]$ in the present configuration, yield more stable estimation performance and are preferable. Since the LB does not decrease with pilot length, the shortest stable pilot length is an efficient choice.

\begin{figure}
    \centering
    \includegraphics[width=0.75\linewidth]{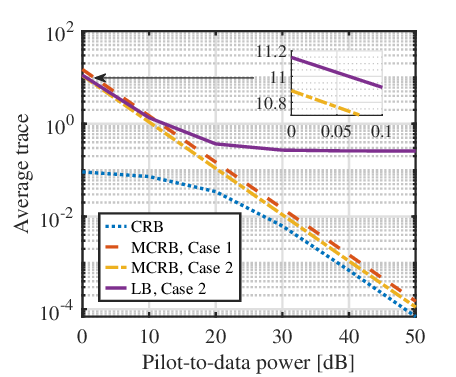}
    \caption{Average trace of the CRB, MCRB, and LB versus pilot-to-data power with pilot length of $3$.}
    \label{fig:CRB_MCRB_LB_PilotPwr}
\end{figure}

For a fixed pilot length of $3$, the average traces versus pilot-to-data power are shown in Fig.~\ref{fig:CRB_MCRB_LB_PilotPwr}. Since the individual parameters exhibit broadly similar trends in Fig.~\ref{fig:CRB_MCRB_LB_PilotLen}, the metrics are aggregated by taking the trace of the whole matrix, rather than being shown separately for each parameter. Each point is averaged over $1000$ channel realizations. The CRB and MCRB decrease almost linearly on the logarithmic scale as pilot power increases. In Case~2, however, the LB initially follows the MCRB at low pilot power, where the MCRB term dominates. As pilot power increases, the MCRB term continues to decrease while the squared-bias term remains nearly unchanged, so the LB gradually approaches a bias-dominated floor. Beyond $\SI{20}{dB}$, the LB improvement becomes marginal in the present configuration.

\section{Conclusion} \label{sec:conclusion}
In this paper, we analyzed parameter estimation in AFDM-ISAC systems under model misspecification. We extended the CRB to a general observation model with unknown data symbols and derived the MCRB for covariance mismatch and combined covariance-and-mean mismatch. We further characterized the associated pseudotrue parameters, estimation bias, and LB on the MSE. 
Beyond validating these bounds, the simulation results show that the conventional CRB is overly optimistic under model mismatch, while the MCRB and LB provide more informative variance and MSE benchmarks for practical system design.
In particular, the sequential single-target estimation used in practice introduces a comparatively large bias that does not decrease noticeably with pilot length, causing the MSE to be dominated by bias rather than variance. 
As a result, increasing pilot length or power lowers the CRB and MCRB but brings little gain once the LB reaches its bias floor, suggesting a short pilot is an efficient choice. 
Future work may investigate bias-aware pilot designs that reduce the interference from the unestimated targets, as well as joint multi-target estimation and bias-compensation methods that alleviate the mismatch introduced by sequential estimation. 
Although developed for AFDM-ISAC systems, the MCRB framework in this paper can also be extended to other waveforms by adapting the corresponding signal representation.

\appendices
\section{Derivatives of the True Mean and Covariance} \label{Appen_deriv_true}
To differentiate with respect to the delay parameter, we use the continuous frequency-domain representation
\begin{align}
    \mathbf{\Pi}^{l_i} = \mathbf{F}^\text{H} \mathbf{\Delta}_{l_i} \mathbf{F},
\end{align}
where $\mathbf{\Delta}_{l_i}=\text{diag}(e^{-j2 \pi l_i n/N},n=0,1,...,N-1)$. The resulting derivative is evaluated at integer values of $l_i$, consistent with the discrete-time models in (\ref{rs_TD}) and (\ref{yx_DAFT}).

The derivatives of $\bm{\mu}(\bar{\bm{\theta}})$ are given by
\begin{align}
    \frac{\partial \bm{\mu}(\bar{\bm{\theta}})}{\partial \Re(\bar{h}_i)} 
    &= \bar{\mathbf{B}}_i \mathbf{x}_\text{p}, &
    \frac{\partial \bm{\mu}(\bar{\bm{\theta}})}{\partial \Im(\bar{h}_i)}
    &= j \bar{\mathbf{B}}_i \mathbf{x}_\text{p},\\
    \frac{\partial \bm{\mu}(\bar{\bm{\theta}})}{\partial \bar{k}_i}
    &= \bar{h}_i \dot{\bar{\mathbf{B}}}_i^{(k)} \mathbf{x}_\text{p}, &
    \frac{\partial \bm{\mu}(\bar{\bm{\theta}})}{\partial \bar{l}_i}
    &= \bar{h}_i \dot{\bar{\mathbf{B}}}_i^{(l)} \mathbf{x}_\text{p},
\end{align}
where 
$\bar{\mathbf{B}}_i\triangleq\mathbf{A} \bm{\Delta}_{\bar{k}_i} \bm{\Pi}^{\bar{l}_i} \mathbf{A}^\text{H}$, 
$\dot{\bar{\mathbf{B}}}_i^{(k)} \triangleq -j \frac{2\pi}{N} \mathbf{A} \mathbf{D}\bm{\Delta}_{\bar{k}_i} \bm{\Pi}^{\bar{l}_i} \mathbf{A}^\text{H}$, 
$\dot{\bar{\mathbf{B}}}_i^{(l)} \triangleq -j \frac{2\pi}{N} \mathbf{A} \bm{\Delta}_{\bar{k}_i} \mathbf{F}^\text{H} \mathbf{D} \mathbf{\Delta}_{\bar{l}_i} \mathbf{F} \mathbf{A}^\text{H}$, 
$\mathbf{D}\triangleq\text{diag}(n,n=0,1,...,N-1)$ and $i=1,...,P$.

For brevity, the covariance of the transmitted data is denoted as $\mathbf{R}_x = \text{Cov}[\mathbf{x}_\text{d}]$. Then, the derivatives of $\mathbf{C}(\bar{\bm{\theta}})$ can be expressed as
\begin{align}
    \frac{\partial \mathbf{C}(\bar{\bm{\theta}})}{\partial \Re(\bar{h}_i)} 
    &= \bar{\mathbf{B}}_i \mathbf{R}_x \bar{\mathbf{H}}_\text{eff}^\text{H}
    + \bar{\mathbf{H}}_\text{eff} \mathbf{R}_x \bar{\mathbf{B}}_i^\text{H},\\
    \frac{\partial \mathbf{C}(\bar{\bm{\theta}})}{\partial \Im(\bar{h}_i)}
    &= j \bar{\mathbf{B}}_i \mathbf{R}_x \bar{\mathbf{H}}_\text{eff}^\text{H}
    - j \bar{\mathbf{H}}_\text{eff} \mathbf{R}_x \bar{\mathbf{B}}_i^\text{H},\\
    \frac{\partial \mathbf{C}(\bar{\bm{\theta}})}{\partial \bar{k}_i}
    &= \bar{h}_i \dot{\bar{\mathbf{B}}}_i^{(k)} \mathbf{R}_x \bar{\mathbf{H}}_\text{eff}^\text{H}
    + \bar{\mathbf{H}}_\text{eff} \mathbf{R}_x \bar{h}_i^*(\dot{\bar{\mathbf{B}}}_i^{(k)})^\text{H},\\
    \frac{\partial \mathbf{C}(\bar{\bm{\theta}})}{\partial \bar{l}_i}
    &= \bar{h}_i \dot{\bar{\mathbf{B}}}_i^{(l)} \mathbf{R}_x \bar{\mathbf{H}}_\text{eff}^\text{H}
    + \bar{\mathbf{H}}_\text{eff} \mathbf{R}_x \bar{h}_i^*(\dot{\bar{\mathbf{B}}}_i^{(l)})^\text{H}.
\end{align}

\section{Derivation of $\Tilde{\bm{\theta}}$, $\mathbf{J}_{\Tilde{\bm{\theta}}}$ and $\mathbf{K}_{\Tilde{\bm{\theta}}}$} \label{Appen_JK}
Based on the definitions in (\ref{pseudo_para_def_gen}), (\ref{J_def_gen}), and (\ref{K_def_gen}), the derivations of $\Tilde{\bm{\theta}}$, $\mathbf{J}_{\Tilde{\bm{\theta}}}$, and $\mathbf{K}_{\Tilde{\bm{\theta}}}$ under Case 1 and Case 2 in Sec. \ref{subsec:MCRB} are respectively provided below.

\subsubsection{Case 1} \label{Appen_JK_case1}
We begin by deriving the pseudotrue parameter $\Tilde{\bm{\theta}}$.
By substituting (\ref{loglike_case1}) into the objective function in (\ref{pseudo_para_def_gen}) and discarding the term unrelated to $\bm{\theta}$, we have
\begin{align}
    \arg\min_{\bm{\theta} \in \bm{\Theta}} -\mathbb{E}_p[l_\text{a}(\bm{\theta})] 
    = \arg\min_{\bm{\theta} \in \bm{\Theta}} 
    \mathbb{E}_p[\bm{\epsilon}(\bm{\theta})^\text{H} \mathbf{C}_\text{a}^{-1} \bm{\epsilon}(\bm{\theta})] ,
\end{align}
where $\bm{\epsilon}(\bm{\theta}) \triangleq \mathbf{y} - \bm{\mu}(\bm{\theta}) = \Tilde{\bm{\epsilon}}+\Delta\bm{\mu}(\bm{\theta})$, $\Tilde{\bm{\epsilon}} \triangleq \mathbf{y} - \bm{\mu}(\bar{\bm{\theta}})$, and $\Delta\bm{\mu}(\bm{\theta}) \triangleq \bm{\mu}(\bar{\bm{\theta}}) - \bm{\mu}(\bm{\theta})$.
Using $\mathbb{E}_p[\mathbf{y}]=\bm{\mu}(\bar{\bm{\theta}})$, it has the equivalent form given by
\begin{align}
    \Tilde{\bm{\theta}} = \arg\min_{\bm{\theta} \in \bm{\Theta}} \Delta\bm{\mu}(\bm{\theta})^\text{H} \mathbf{C}_\text{a}^{-1} \Delta\bm{\mu}(\bm{\theta}).
    \label{psedo_theta_case1_appen}
\end{align}
Under the condition that $\mathbf{C}_\text{a}^{-1}$ is a scaled identity matrix, (\ref{psedo_theta_case1_appen}) leads to the result of (\ref{pseudo_para_case1}).

Next, we derive the first and second derivatives of the likelihood function $l_\text{a}(\bm{\theta})$ given by (\ref{loglike_case1}) w.r.t. the parameter vector $\bm{\theta}$.

The $m$-th entry of the first-order derivative of $l_\text{a}(\bm{\theta})$ has the following form as
\begin{align} 
    \frac{\partial l_\text{a}(\bm{\theta})}{\partial [\bm{\theta}]_m} 
    &= -\frac{\partial \bm{\epsilon}(\bm{\theta})^\text{H}}{\partial [\bm{\theta}]_m} \mathbf{C}_\text{a}^{-1} \bm{\epsilon}(\bm{\theta})
    - \bm{\epsilon}(\bm{\theta})^\text{H} \mathbf{C}_\text{a}^{-1} 
    \frac{\partial \bm{\epsilon}(\bm{\theta})}{\partial [\bm{\theta}]_m} \notag \\
    &= 2\Re \left\{ \frac{\partial \bm{\mu}(\bm{\theta})^\text{H}}{\partial [\bm{\theta}]_m} \mathbf{C}_\text{a}^{-1} \bm{\epsilon}(\bm{\theta}) \right\}.
    \label{1deriv_case1}
\end{align}
Further, the $(m,n)$-th entry of the second-order derivative can be obtained as
\begin{align} 
    \left[ \frac{\partial^2 l_\text{a}(\bm{\theta})}{\partial\bm{\theta}\partial\bm{\theta}^\text{T}} \right]_{m,n}
    =& 2\Re \left\{ 
    \frac{\partial^2 \bm{\mu}(\bm{\theta})^\text{H}}{\partial[\bm{\theta}]_m \partial[\bm{\theta}]_n} 
    \mathbf{C}_\text{a}^{-1} 
    \bm{\epsilon}(\bm{\theta}) 
    \right\} \notag\\
    &- 2\Re \left\{ 
    \frac{\partial \bm{\mu}(\bm{\theta})^\text{H}}{\partial[\bm{\theta}]_m} 
    \mathbf{C}_\text{a}^{-1} 
    \frac{\partial \bm{\mu}(\bm{\theta})}{\partial[\bm{\theta}]_n} 
    \right\}. \label{2deriv_case1}
\end{align}

Substituting (\ref{2deriv_case1}) into (\ref{J_def_gen}) yields
\begin{align} 
    [\mathbf{J}_{\Tilde{\bm{\theta}}}]_{m,n} 
    =& -2\Re \left\{ 
    \frac{\partial^2 \bm{\mu}(\bm{\theta})^\text{H}}{\partial[\bm{\theta}]_m \partial[\bm{\theta}]_n} 
    \mathbf{C}_\text{a}^{-1} 
    \mathbb{E}_p[\bm{\epsilon}(\bm{\theta})]  
    \right\} \Bigg| _{\bm{\theta}=\Tilde{\bm{\theta}}}\notag\\
    & +2\Re \left\{ 
    \frac{\partial \bm{\mu}(\bm{\theta})^\text{H}}{\partial[\bm{\theta}]_m} 
    \mathbf{C}_\text{a}^{-1} 
    \frac{\partial \bm{\mu}(\bm{\theta})}{\partial[\bm{\theta}]_n}
    \right\} \Bigg| _{\bm{\theta}=\Tilde{\bm{\theta}}},
    \label{Jmn_mid_case1}
\end{align}
where $\mathbb{E}_p[\bm{\epsilon}(\bm{\theta})] | _{\bm{\theta}=\Tilde{\bm{\theta}}} = \bm{\mu}(\bar{\bm{\theta}}) - \bm{\mu}(\Tilde{\bm{\theta}}) = \mathbf{0}$. This leads to (\ref{J_res_case1}).

Evaluated at the pseudotrue parameter $\Tilde{\bm{\theta}} = \bar{\bm{\theta}}$, (\ref{1deriv_case1}) can also be expressed as
\begin{align}
    \frac{\partial l_\text{a}(\bm{\theta})}{\partial [\bm{\theta}]_m} \Bigg| _{\bm{\theta}=\Tilde{\bm{\theta}}}
    &= 2\Re \left\{ \frac{\partial \bm{\mu}(\bm{\theta})^\text{H}}{\partial [\bm{\theta}]_m} \Bigg| _{\bm{\theta}=\Tilde{\bm{\theta}}}\mathbf{C}_\text{a}^{-1} \bm{\epsilon}(\bar{\bm{\theta}}) \right\} \notag\\
    &= 2\Re \left\{ \frac{\partial \bm{\mu}(\bm{\theta})^\text{H}}{\partial [\bm{\theta}]_m} \mathbf{C}_\text{a}^{-1} \Tilde{\bm{\epsilon}} \right\} \Bigg| _{\bm{\theta}=\Tilde{\bm{\theta}}} \notag\\
    &= (z_m+z_m^*) | _{\bm{\theta}=\Tilde{\bm{\theta}}},
\end{align}
where $z_m \triangleq \frac{\partial \bm{\mu}(\bm{\theta})^\text{H}}{\partial [\bm{\theta}]_m} \mathbf{C}_\text{a}^{-1} \Tilde{\bm{\epsilon}}$. 
Hence, the $(m,n)$-th entry of (\ref{K_def_gen}) can be obtained as
\begin{align} 
    [\mathbf{K}_{\Tilde{\bm{\theta}}}]&_{m,n} 
    = \mathbb{E}_p [(z_m+z_m^*)(z_n+z_n^*)|_{\bm{\theta}=\Tilde{\bm{\theta}}}] \notag \\
    &\overset{(a)}{=} \mathbb{E}_p [z_m z_n^*|_{\bm{\theta}=\Tilde{\bm{\theta}}}] + 
    \mathbb{E}_p [z_m^*z_n|_{\bm{\theta}=\Tilde{\bm{\theta}}}] \notag \\
    &= 2\Re\left\{ \mathbb{E}_p [z_m z_n^*|_{\bm{\theta}=\Tilde{\bm{\theta}}}] \right\} \notag\\
    &= 2\Re\left\{ 
    \frac{\partial \bm{\mu}(\bm{\theta})^\text{H}}
    {\partial [\bm{\theta}]_m} 
    \mathbf{C}_\text{a}^{-1} 
    \mathbb{E}_p[\Tilde{\bm{\epsilon}}\Tilde{\bm{\epsilon}}^\text{H}]
    \mathbf{C}_\text{a}^{-1} 
    \frac{\partial \bm{\mu}(\bm{\theta})}
    {\partial [\bm{\theta}]_n} 
    \right\}\Bigg|_{\bm{\theta}=\Tilde{\bm{\theta}}},
    \label{Kmn_mid_case1}
\end{align}
where $\mathbb{E}_p[\Tilde{\bm{\epsilon}}\Tilde{\bm{\epsilon}}^\text{H}]=\mathbf{C}(\bar{\bm{\theta}})$, leading to (\ref{K_res_case1}). In the equation above, the equality $(a)$ follows from $\mathbb{E}_p [z_m z_n]=\mathbb{E}_p [z_m^*z_n^*] = 0$, as
\begin{align}
    \mathbb{E}_p [z_mz_n] = \frac{\partial \bm{\mu}(\bm{\theta})^\text{H}}
    {\partial [\bm{\theta}]_m} 
    \mathbf{C}_\text{a}^{-1} 
    \mathbb{E}_p[\Tilde{\bm{\epsilon}}\Tilde{\bm{\epsilon}}^\text{T}]
    \mathbf{C}_\text{a}^{-1} 
    \frac{\partial \bm{\mu}(\bm{\theta})^*}
    {\partial [\bm{\theta}]_n}, \\
    \mathbb{E}_p [z_m^*z_n^*] = \frac{\partial \bm{\mu}(\bm{\theta})^\text{T}}
    {\partial [\bm{\theta}]_m} 
    \mathbf{C}_\text{a}^{-1} 
    \mathbb{E}_p[\Tilde{\bm{\epsilon}}^*\Tilde{\bm{\epsilon}}^\text{H}]
    \mathbf{C}_\text{a}^{-1} 
    \frac{\partial \bm{\mu}(\bm{\theta})}
    {\partial [\bm{\theta}]_n},
\end{align}
and the circularly symmetric complex Gaussian random variable $\Tilde{\bm{\epsilon}}$ satisfies $\mathbb{E}_p[\Tilde{\bm{\epsilon}}\Tilde{\bm{\epsilon}}^\text{T}] = \mathbb{E}_p[\Tilde{\bm{\epsilon}}^*\Tilde{\bm{\epsilon}}^\text{H}] = \mathbf{0}$.

\subsubsection{Case 2} \label{Appen_JK_case2}
Similar to Case 1, by substituting (\ref{loglike_case2}) into the objective function in (\ref{pseudo_para_case2}), we have
\begin{align}
    \arg\min_{\bm{\theta}_i \in \bm{\Theta}_i} -\mathbb{E}_p[l_\text{a}(\bm{\theta}_{i})] 
    = \arg\min_{\bm{\theta}_i \in \bm{\Theta}_i} 
    \mathbb{E}_p[\mathbf{e}(\bm{\theta}_i)^\text{H} \mathbf{C}_\text{a}^{-1} \mathbf{e}(\bm{\theta}_i)] ,
\end{align}
where $\mathbf{e}(\bm{\theta}_{i}) \triangleq \mathbf{y} - \bm{\mu}_\text{a}(\bm{\theta}_i)=\Tilde{\mathbf{e}}+\Delta\bm{\mu}_\text{a}(\bm{\theta}_i)$, $\Tilde{\mathbf{e}} \triangleq \mathbf{y} - \bm{\mu}(\bar{\bm{\theta}})$, and $\Delta\bm{\mu}_\text{a}(\bm{\theta}_i) \triangleq \bm{\mu}(\bar{\bm{\theta}}) - \bm{\mu}_\text{a}(\bm{\theta}_{i})$.
Using $\mathbb{E}_p[\mathbf{y}]=\bm{\mu}(\bar{\bm{\theta}})$, its equivalent form is given by
\begin{align}
    \Tilde{\bm{\theta}}_i = \arg\min_{\bm{\theta}_i \in \bm{\Theta}_i} \Delta\bm{\mu}_\text{a}(\bm{\theta}_i)^\text{H} \mathbf{C}_\text{a}^{-1} \Delta\bm{\mu}_\text{a}(\bm{\theta}_i),
    \label{pseudo_para_case1_append}
\end{align}
which leads to the result of (\ref{pseudo_para_case2}).

Next, we derive the first and second derivatives of the likelihood function $l_\text{a}(\bm{\theta}_i)$ given by (\ref{loglike_case2}) w.r.t. the parameter vector $\bm{\theta}_i$.

Similar to (\ref{1deriv_case1}), the $m$-th entry of the first-order derivatives has the following form as
\begin{align}
    \frac{\partial l_\text{a}(\bm{\theta}_{i})}{\partial[\bm{\theta}_{i}]_m} 
    &= -\frac{\partial \mathbf{e}(\bm{\theta}_{i})^\text{H}}{\partial[\bm{\theta}_{i}]_m} \mathbf{C}_\text{a}^{-1} \mathbf{e}(\bm{\theta}_{i})
    - \mathbf{e}(\bm{\theta}_{i})^\text{H} \mathbf{C}_\text{a}^{-1} 
    \frac{\partial \mathbf{e}(\bm{\theta}_{i})}{\partial[\bm{\theta}_{i}]_m} \notag \\
    &= 2\Re \left\{ \frac{\partial \bm{\mu}_\text{a}(\bm{\theta}_{i})^\text{H}}{\partial[\bm{\theta}_{i}]_m} \mathbf{C}_\text{a}^{-1} \mathbf{e}(\bm{\theta}_{i}) \right\},
    \label{1deriv_case2}
\end{align}

Then, the $(m,n)$-th entry of the second-order derivative is given by
\begin{align}
    \left[ \frac{\partial^2 l_\text{a}(\bm{\theta}_{i})}{\partial\bm{\theta}_{i}\partial\bm{\theta}_{i}^\text{T}} \right]_{m,n}
    =& 2\Re \left\{ 
    \frac{\partial^2 \bm{\mu}_\text{a}(\bm{\theta}_{i})^\text{H}}{\partial[\bm{\theta}_{i}]_m \partial[\bm{\theta}_{i}]_n} 
    \mathbf{C}_\text{a}^{-1} 
    \mathbf{e}(\bm{\theta}_{i}) 
    \right\} \notag\\
    &- 2\Re \left\{ 
    \frac{\partial \bm{\mu}_\text{a}(\bm{\theta}_{i})^\text{H}}{\partial[\bm{\theta}_{i}]_m} 
    \mathbf{C}_\text{a}^{-1} 
    \frac{\partial \bm{\mu}_\text{a}(\bm{\theta}_{i})}{\partial[\bm{\theta}_{i}]_n} 
    \right\}. \label{2deriv_case2}
\end{align}

Substituting (\ref{2deriv_case2}) into $[\mathbf{J}_{\Tilde{\bm{\theta}}_i}]_{m,n}$ yields
\begin{align}
    [\mathbf{J}_{\Tilde{\bm{\theta}}_i}]_{m,n} 
    =& -2\Re \left\{ 
    \frac{\partial^2 \bm{\mu}_\text{a}(\bm{\theta}_{i})^\text{H}}{\partial[\bm{\theta}_{i}]_m \partial[\bm{\theta}_{i}]_n} 
    \mathbf{C}_\text{a}^{-1} 
    \mathbb{E}_p[\mathbf{e}(\bm{\theta}_{i})]  
    \right\} \Bigg| _{\bm{\theta}_{i}=\Tilde{\bm{\theta}}_i}\notag\\
    & +2\Re \left\{ 
    \frac{\partial \bm{\mu}_\text{a}(\bm{\theta}_{i})^\text{H}}{\partial[\bm{\theta}_{i}]_m} 
    \mathbf{C}_\text{a}^{-1} 
    \frac{\partial \bm{\mu}_\text{a}(\bm{\theta}_{i})}{\partial[\bm{\theta}_{i}]_n}
    \right\} \Bigg| _{\bm{\theta}_{i}=\Tilde{\bm{\theta}}_i},
    \label{Jmn_mid}
\end{align}
where $\mathbb{E}_p[\mathbf{e}(\bm{\theta}_{i})] = \bm{\mu}(\bar{\bm{\theta}}) - \bm{\mu}_\text{a}(\bm{\theta}_{i})$. This further leads to the result of (\ref{J_res_case2}).

Meanwhile, (\ref{1deriv_case2}) can be rearranged as
\begin{align}
    \frac{\partial l_\text{a}(\bm{\theta}_{i})}{\partial [\bm{\theta}_{i}]_m} 
    =& 2\Re \left\{ \frac{\partial \bm{\mu}_\text{a}(\bm{\theta}_{i})^\text{H}}{\partial[\bm{\theta}_{i}]_m} 
    \mathbf{C}_\text{a}^{-1} 
    \Tilde{\mathbf{e}} \right\} \notag \\
    &+ 2\Re \left\{ \frac{\partial \bm{\mu}_\text{a}(\bm{\theta}_{i})^\text{H}}{\partial[\bm{\theta}_{i}]_m} 
    \mathbf{C}_\text{a}^{-1} 
    \Delta\bm{\mu}_\text{a}(\bm{\theta}_i) \right\}.
    \label{1deriv_case2_re}
\end{align}
For brevity, the first term in (\ref{1deriv_case2_re}), a random variable with expectation of $0$, is denoted as $2\Re\{z_{i,m}\}$, where the scalar $z_{i,m} = \frac{\partial \bm{\mu}_\text{a}(\bm{\theta}_{i})^\text{H}}{\partial[\bm{\theta}_{i}]_m} \mathbf{C}_\text{a}^{-1}     \Tilde{\mathbf{e}}$. 
The second term in (\ref{1deriv_case2_re}) vanishes when $\bm{\theta}_{i}=\Tilde{\bm{\theta}}_i$. This follows from the fact that $\Tilde{\bm{\theta}}_i$ is the minimizer of the optimization problem in (\ref{pseudo_para_case1_append}), which implies
\begin{align}
    &\frac{\partial \Delta\bm{\mu}_\text{a}(\bm{\theta}_i)^\text{H} \mathbf{C}_\text{a}^{-1} \Delta\bm{\mu}_\text{a}(\bm{\theta}_i)}{\partial \bm{\theta}_i}\Bigg|_{\bm{\theta}_i 
    = \Tilde{\bm{\theta}}_i} \notag\\
    &= 2\Re \left\{ \frac{\partial \bm{\mu}_\text{a}(\bm{\theta}_i)^\text{H}}{\partial \bm{\theta}_i} 
    \mathbf{C}_\text{a}^{-1} \Delta\bm{\mu}_\text{a}(\bm{\theta}_i)
    \right\}
    \Bigg|_{\bm{\theta}_i 
    = \Tilde{\bm{\theta}}_i}= \mathbf{0}.
\end{align}
Hence, the $(m,n)$-th entry of $\mathbf{K}_{\Tilde{\bm{\theta}}_i}$ can be obtained as
\begin{align}
    [\mathbf{K}_{\Tilde{\bm{\theta}}_i}]&_{m,n} 
    = \mathbb{E}_p [(z_{i,m}+z_{i,m}^*)(z_{i,n}+z_{i,n}^*)|_{\bm{\theta}_{i}=\Tilde{\bm{\theta}}_i}] \notag \\
    &\overset{(a)}{=} \mathbb{E}_p [z_{i,m}z_{i,n}^*|_{\bm{\theta}_{i}=\Tilde{\bm{\theta}}_i}] + 
    \mathbb{E}_p [z_{i,m}^*z_{i,n}|_{\bm{\theta}_{i}=\Tilde{\bm{\theta}}_i}] \notag \\
    &= 2\Re\left\{ 
    \frac{\partial \bm{\mu}_{\text{a}}(\bm{\theta}_{i})^\text{H}}
    {\partial [\bm{\theta}_{i}]_m} 
    \mathbf{C}_\text{a}^{-1} 
    \mathbb{E}_p[\Tilde{\mathbf{e}}\Tilde{\mathbf{e}}^\text{H}]
    \mathbf{C}_\text{a}^{-1} 
    \frac{\partial \bm{\mu}_{\text{a}}(\bm{\theta}_{i})}
    {\partial [\bm{\theta}_{i}]_n} 
    \right\}\Bigg|_{\bm{\theta}_{i}=\Tilde{\bm{\theta}}_i},
    \label{Kmn_mid_case2}
\end{align}
where $\mathbb{E}_p[\Tilde{\mathbf{e}}\Tilde{\mathbf{e}}^\text{H}]=\mathbf{C}(\bar{\bm{\theta}})$, leading to (\ref{K_res_case2}). Like (\ref{Kmn_mid_case1}), the equality $(a)$ of (\ref{Kmn_mid_case2}) follows from $\mathbb{E}_p [z_{i,m}z_{i,n}]=\mathbb{E}_p [z_{i,m}^*z_{i,n}^*] = 0$, as it satisfies $\mathbb{E}_p[\Tilde{\mathbf{e}}\Tilde{\mathbf{e}}^\text{T}] = \mathbb{E}_p[\Tilde{\mathbf{e}}^*\Tilde{\mathbf{e}}^\text{H}] = \mathbf{0}$.

\section{Derivatives of the Assumed Mean} \label{Appen_deriv_assume}
Following the similar procedure in Appendix \ref{Appen_deriv_true}, the derivatives of the assumed mean $\bm{\mu}_{\text{a}}(\bm{\theta}_{i})$ w.r.t. the unknown parameters $\bm{\theta}_{i}$ can be obtained. The first derivatives are given by
\begin{align}
    \frac{\partial \bm{\mu}_{\text{a}}(\bm{\theta}_{i})}{\partial \Re(h_i)} 
    &= \mathbf{B}_i \mathbf{x}_\text{p}, &
    \frac{\partial \bm{\mu}_{\text{a}}(\bm{\theta}_{i})}{\partial \Im(h_i)}
    &= j \mathbf{B}_i \mathbf{x}_\text{p},\\
    \frac{\partial \bm{\mu}_{\text{a}}(\bm{\theta}_{i})}{\partial k_i}
    &= h_i \dot{\mathbf{B}}_i^{(k)} \mathbf{x}_\text{p}, &
    \frac{\partial \bm{\mu}_{\text{a}}(\bm{\theta}_{i})}{\partial l_i}
    &= h_i \dot{\mathbf{B}}_i^{(l)} \mathbf{x}_\text{p},
\end{align}
where 
$\mathbf{B}_i\triangleq\mathbf{A} \bm{\Delta}_{k_i} \bm{\Pi}^{l_i} \mathbf{A}^\text{H}$, 
$\dot{\mathbf{B}}_i^{(k)} \triangleq -j \frac{2\pi}{N} \mathbf{A} \mathbf{D}\bm{\Delta}_{k_i} \bm{\Pi}^{l_i} \mathbf{A}^\text{H}$, 
$\dot{\mathbf{B}}_i^{(l)} \triangleq -j \frac{2\pi}{N} \mathbf{A} \bm{\Delta}_{k_i} \mathbf{F}^\text{H} \mathbf{D} \mathbf{\Delta}_{l_i} \mathbf{F} \mathbf{A}^\text{H}$, 
and $\mathbf{D}\triangleq\text{diag}(n,n=0,1,...,N-1)$.

Correspondingly, the second derivatives are presented as follows:
\begin{align}
    \frac{\partial^2 \bm{\mu}_{\text{a}}(\bm{\theta}_{i})}{\partial \Re(h_i) ^2} 
    &= \frac{\partial^2 \bm{\mu}_{\text{a}}(\bm{\theta}_{i})}{\partial \Im(h_i) ^2} 
    = \frac{\partial^2 \bm{\mu}_{\text{a}}(\bm{\theta}_{i})}{\partial \Re(h_i) \partial \Im(h_i)} 
    = \mathbf{0} \\
    \frac{\partial^2 \bm{\mu}_{\text{a}}(\bm{\theta}_{i})}{\partial \Re(h_i) \partial k_i}
    &= \dot{\mathbf{B}}_i^{(k)} \mathbf{x}_\text{p},\\
    \frac{\partial^2 \bm{\mu}_{\text{a}}(\bm{\theta}_{i})}{\partial \Re(h_i) \partial l_i}
    &= \dot{\mathbf{B}}_i^{(l)} \mathbf{x}_\text{p},\\
    \frac{\partial^2 \bm{\mu}_{\text{a}}(\bm{\theta}_{i})}{\partial \Im(h_i) \partial k_i}
    &= j \dot{\mathbf{B}}_i^{(k)} \mathbf{x}_\text{p},\\
    \frac{\partial^2 \bm{\mu}_{\text{a}}(\bm{\theta}_{i})}{\partial \Im(h_i) \partial l_i}
    &= j \dot{\mathbf{B}}_i^{(l)} \mathbf{x}_\text{p},\\
    \frac{\partial^2 \bm{\mu}_{\text{a}}(\bm{\theta}_{i})}{\partial k_i^2} 
    &= -\frac{4\pi^2}{N^2} h_i \mathbf{A} \mathbf{D}^2\bm{\Delta}_{k_i} \bm{\Pi}^{l_i} \mathbf{A}^\text{H} \mathbf{x}_\text{p},\\
     \frac{\partial^2 \bm{\mu}_{\text{a}}(\bm{\theta}_{i})}{\partial l_i^2} 
    &= -\frac{4\pi^2}{N^2} h_i \mathbf{A} \bm{\Delta}_{k_i} \mathbf{F}^\text{H} \mathbf{D}^2 \mathbf{\Delta}_{l_i} \mathbf{F} \mathbf{A}^\text{H} \mathbf{x}_\text{p},\\
     \frac{\partial^2 \bm{\mu}_{\text{a}}(\bm{\theta}_{i})}{\partial k_i \partial l_i} 
    &= -\frac{4\pi^2}{N^2} h_i \mathbf{A} \mathbf{D} \bm{\Delta}_{k_i} \mathbf{F}^\text{H} \mathbf{D} \mathbf{\Delta}_{l_i} \mathbf{F} \mathbf{A}^\text{H} \mathbf{x}_\text{p}.
\end{align}

\bibliographystyle{IEEEtran}
\bibliography{IEEEabrv, reference}

\vfill

\end{document}